\newtheorem{theorem}{Theorem}
\newtheorem{definition}{Definition}
\newtheorem{lemma}{Lemma}
\newif\ifblind
\newif\ifappendix
  \title{Linear Temporal Logic Modulo Theories over Finite Traces \\
  (Extended Version)}
  \title{Linear Temporal Logic Modulo Theories over Finite Traces}
  \author{Submission \#2087}
  \author{
    Luca Geatti \and
    Alessandro Gianola \and
    Nicola Gigante
    \affiliations
    Free University of Bozen-Bolzano, Italy
    \emails
    \{geatti,gianola,gigante\}@inf.unibz.it
  }
\begin{document}

\maketitle

\begin{abstract}
  This paper studies Linear Temporal Logic over Finite Traces (\LTLf) where
  proposition letters are replaced with first-order formulas interpreted over
  arbitrary theories, in the spirit of Satisfiability Modulo Theories. The
  resulting logic, called \LTLf Modulo Theories (\LTLfMT), is semi-decidable.
  Nevertheless, its high expressiveness comes useful in a number of use cases,
  such as model-checking of data-aware processes and data-aware planning.
  Despite the general undecidability of these problems, being able to solve
  satisfiable instances is a compromise worth studying. After motivating and
  describing such use cases, we provide a sound and complete semi-decision
  procedure for \LTLfMT based on the SMT encoding of a one-pass tree-shaped
  tableau system.
  %\ala Reynolds. 
  The algorithm is implemented in the \BLACK satisfiability checking tool, and
  an experimental evaluation shows the feasibility of the approach on novel
  benchmarks.
\end{abstract}

%!TEX root = ../ijcai22.tex

\section{Introduction}
\label{sec:introduction}

Linear Temporal Logic (\LTL)~\cite{pnueli1977temporal} and \LTL over finite
traces (\LTLf)~\cite{DeGiacomoV13} are common languages to express temporal
properties in the fields of \emph{formal verification} and AI. In particular,
\LTLf has recently gained traction in AI and \emph{business process
modeling}~\cite{DeGiacomoMGMM14}. In these fields, reasoning over finite traces
is more natural.

However, in the modeling of \emph{data-aware processes} \cite{MSCS20,LiDV17},
the propositional nature of \LTLf is a severe limitation.
These are systems whose behavior depends on and/or manipulate a persistent data
storage such as a relational database. In these contexts, one would like to
express actions and constraints that depend on the contents of the database,
which is an inherently first-order object. A similar use case is that of
\emph{data-aware} planning, \ie planning problems~\cite{GhallabNT04} where
conditions of actions depend on the contents of the persistent data storage.

In order to obtain the expressiveness needed to model and reason about these
scenarios, this paper introduces and studies \emph{\LTLf Modulo Theory}
(\LTLfMT), a logic that extends \LTLf by replacing propositional symbols with
first-order formulas over an arbitrary theory, in the spirit of
\emph{Satisfiability Modulo Theory} (SMT). While \LTLfMT is easily seen to be
undecidable in general, for decidable theories it is \emph{semi-decidable}, \ie
a positive answer can always be obtained for satisfiable instances, while
reasoning might not terminate over unsatisfiable instances. We argue that, for
complex problems and scenarios like those mentioned above, where reasoning is
inherently undecidable anyway, being able to solve satisfiable instances is a
compromise worth studying. 

In particular, we provide a semi-decision procedure for \LTLfMT satisfiability
based on the one-pass tree-shaped tableau for \LTL by
Reynolds~[\citeyear{Reynolds16a}]. In contrast to classic graph-shaped
tableaux for \LTL, Reynolds' builds a tree structure where a single independent
pass is sufficient to decide whether to accept or reject the current branch.
Here, we adapt Reynolds' tableau to \LTLfMT, showing that, for decidable
underlying first-order theories, \LTLfMT is semi-decidable. While  
first-order extensions of \LTL have already been studied before, our setting is
more specific: on the one hand, we are more general than, \eg
\cite{cimatti2020smt} by supporting quantified formulas. On the other hand, we
restrict general first-order \LTL~\cite{KontchakovLWZ04,ArtaleMO19} by not
allowing temporal operators inside quantifiers.\fitpar
%ale: e' questa l'unica frase che diciamo in tutto il paper sul confronto con
%cimatti? ho paura qualcuno potrebbe chiederci di piu, e noi chiaramente abbiamo
%molti punti di forza che ci differenziano con cimatti, non so se riusciamo ad
%evidenziarne esplicitamente qualcuno. Inoltre, per me non e' molto chiaro cosa
%intendiamo con 'complete generality': il fatto che noi non abbiamo la
%disgiunzione nelle FO formule ci permette comunque di dire cosi?

Our main contribution is an SMT-based algorithmic technique for satisfiability
checking of such an expressive logic, which we implement in the \BLACK
tool~\cite{DBLP:conf/tableaux/GeattiGM19,GeattiGMV21}. \BLACK is a recent
state-of-the-art satisfiability checker for \LTL and \LTLf based on a SAT
encoding of Reynolds' tableau. Inspired by that, we provide an SMT encoding of
our tableau for \LTLfMT: suitably encoded SMT formulas represent the branches of
the tableau up to depth $k$, for increasing values of $k$.

An experimental evaluation assesses the applicability of our approach. With a
number of novel benchmarks over different underlying theories, we show that
\BLACK is able to reason over satisfiable instances and, in some cases, over
unsatisfiable instances as well, with promising performance. 

The paper is organized as follows. We introduce \LTLfMT in \cref{sec:ltlfmt}.
Then, \cref{sec:use-cases} discusses some relevant use cases that motivate our
work. \Cref{sec:tableau,sec:encoding} provide a semi-decision procedure for
\LTLfMT and its SMT encoding. \Cref{sec:experiments} evaluates the
implementation of this procedure in the \BLACK satisfiability checker, and
\cref{sec:conclusions} concludes.
\ifappendix
Proofs are found in the Appendix.
\else
%\marginpar{TODO: add citation of arxiv}
Proofs can be found in \cite{arxiv-ltlfmt}.
\fi

%!TEX root = ../ijcai22.tex

\section{LTL Modulo Theories over Finite Traces}
\label{sec:ltlfmt}

$\LTLfMT$ extends Linear Temporal Logic over Finite Traces (\LTLf) by replacing
proposition letters with first-order formulas over arbitrary theories, similarly
to how Satisfiability Modulo Theory (SMT) extends the classical Boolean
satisfiability problem. Before going into the formal details of syntax and
semantics, let us show some examples.

Consider the theory of \emph{linear integer arithmetic} (LIA). This first-order
theory predicates over the set of integer numbers $\Z$, and interprets the $+$
function symbol and the $\le$ relation symbol as the usual sum and comparison
between integer numbers. There is no multiplication symbol. With \LTLfMT over
LIA, we can express formulas such as the following:
\begin{align}
  &\ltl{G(x = y + y)}
  \quad \ltl{(x < y) U y = 0}
  \quad \ltl{G(x > 5) & F(x < 0)}
\end{align}
This first formula on the left states that the interpretation of the variable
$x$ must always (\ie in any time point) be the double of the variable $y$. The
second formula states that the variable $x$ must remain less than $y$ until the
first time $y$ becomes equal to zero. The third one, which is unsatisfiable,
demands $x$ to be always greater than $5$ but also to become less than $0$ in
the future.  Quantifiers are allowed, but \emph{temporal operators} cannot be
nested inside them. For example, the following formula states that $x$ has
always to be even:
\begin{equation}
  \ltl{G(\exists y (x = y + y))}
\end{equation}
While these examples are interesting, there is often the need to go further, by
relating the value of variables at a given time point to their value at the
following instant. %ale: non so se val la pena gia anticiparlo qui, ma bisogna dire ad un certo punto che questo serve per esempio per le applications a model-checking
This is expressed by two term constructors, $\nextvar x$ and
$\wnextvar x$, which both represent, in a different way, the value of the
variable $x$ at the next state. The difference lies in how these terms behave
at the end of the finite trace. When a first-order formula contains some $\nextvar
x$, the next state is required to exist, while it is not required to exist if
it contains only $\wnextvar x$ terms. This replicates the difference between
the \emph{tomorrow} and \emph{weak tomorrow} temporal operators in
\LTLf~\cite{DeGiacomoV13,DeGiacomoDM14}. With these two term constructors in
place, we can express more interesting things such as:
\begin{align}
  &\ltl{x = 0 & ((\nextvar x = x + 1) U x = 42)} \\
  &\ltl{x = 0 & G(\wnextvar x > x & \exists y(x = y + y))} \\
  &\ltl{y = 1 & G(\wnextvar y = y+1 & x = y+y)}
\end{align}
The first formula states that variable $x$ behaves as a counter that increments
until reaching $x=42$. The second one makes $x$ represent \emph{any} strictly
increasing sequence of even numbers, while in the third formula $x$ takes
\emph{all} the first $n$ even numbers for any $n\ge 1$. Note that, in the second
and third formula, replacing $\wnextvar x$ with $\nextvar x$ would result in an
unsatisfiable formula, because each time point would require the existence of
the next one, which is impossible in a finite-trace semantics.

\paragraph{Syntax}

We are now ready to delve into the details of syntax and semantics of \LTLfMT.
Let us start from the syntax. We work with a multi-sorted first-order signature
$\Sigma = \mathcal{S} \cup \mathcal{P} \cup \mathcal{C} \cup \mathcal{F} \cup
\mathcal{V} \cup \mathcal{W}$, composed of a set of sort symbols $\mathcal{S}$,
a set of predicate symbols $\mathcal{P}$, a set of constant symbols
$\mathcal{C}$, a set of function symbols $\mathcal{F}$, a set of variable
symbols $\mathcal{V}$, and a set of \emph{quantified} variables $\mathcal{W}$.
Each constant in $\mathcal C$ and each variable in $\mathcal V$ and $\mathcal W$
is associated with a sort symbol $S\in\mathcal S$, and so are the domains of
each relation symbol and the domains and ranges of each function symbol.

A \emph{$\Sigma$-term} $t$ is generated by the following grammar:
\begin{equation}
  t \coloneqq \ x \choice y \choice c \choice f(t_1,\dots,t_k)
  \choice \nextvar x \choice \wnextvar x
\end{equation}
where $x \in \mathcal{V}$ and $y\in\mathcal W$ are variables, $c \in
\mathcal{C}$ is a constant symbol, $f \in \mathcal{F}$ is a function symbol of
arity $k$, each $t_i$ (for any $i \in \set{1,\dots,k}$) is a $\Sigma$-term, and
$\nextvar$ and $\wnextvar$ are the \emph{next} and \emph{weak next}
constructors.
Note that the distinction between variables in $\mathcal V$ and in $\mathcal W$
is needed since it does not make sense to apply $\nextvar x$ and $\wnextvar x$
to a quantified variable (recall that $\nextvar x$ represents the value of $x$
at the next state). The grammar of \LTLfMT formulas over $\Sigma$ is the
following:
\begin{align}
  \alpha \coloneqq {} & p(t_1,\ldots,t_k) \\
  \lambda \coloneqq {} & \alpha \choice \neg\alpha  \
  \choice \lambda_1\lor\lambda_2 \choice \lambda_1\land\lambda_2
  \choice \exists x \lambda \choice \forall x \lambda \\
  \phi \coloneqq {} &
    \!\top \choice \lambda \choice
     \phi_1\! \lor\! \phi_2 \choice
     \phi_1\! \land\! \phi_2 \choice %\\
     %& \phantom{\top\choice \lambda} \choice
     \ltl{X \phi} \choice
     \ltl{wX \phi} \choice
     \ltl{\phi_1 U \phi_2} \choice
     \ltl{\phi_1 R \phi_2}
\end{align}
where $x\in\mathcal W$, $p \in \Sigma$ is an $k$-ary predicate symbol, each
$t_i$ is a $\Sigma$-term, and $\ltl{X}$, $\ltl{wX}$, $\ltl{U}$, and $\ltl{R}$
are the \emph{tomorrow}, \emph{weak tomorrow}, \emph{until}, and \emph{release}
temporal operators, respectively. In the definition of $\phi$, we force
$\lambda$ to not have free variables from $\mathcal W$. Formulas of type
$\lambda$ as defined above are called \emph{first-order} formulas. We assume the
usual shortcuts for temporal operators, such as $\ltl{F\phi\equiv\top U \phi}$
and $\ltl{G\phi\equiv\false R \phi}$. 
% Note that, in general, we can make use of \emph{proposition letters}, as in
% propositional \LTLf, since all the theories we are interested in can be
% combined with the theory of equality and uninterpreted functions (EUF). 
Note that by the grammar above, \LTLfMT formulas are always in \emph{negated
normal form}, for ease of exposition. The formulas of \LTLfMT are assumed to be
well-typed with regard to the sorts of all the involved symbols. One may also
consider \emph{past operators}, but we omit them here to ease exposition.

\paragraph{Semantics}
We use the standard notion of \emph{first-order $\Sigma$-structure} $M$ over the
first-order (multi-sorted) signature $\Sigma$, which consists of a domain and of
an interpretation $s^M$ of all sort, predicate, constant, and function symbols
$s \in \Sigma$. In particular, sort symbols are interpreted as pairwise disjoint
sets, whose union $\dom(M)$ is the \emph{domain} of $M$. In line with
\cite{SMT}, we define a \emph{theory} \theory as a (finite or infinite) class of
$\Sigma$-structures. 

Let $\Sigma = \mathcal{S} \cup \mathcal{P} \cup \mathcal{C} \cup \mathcal{F}
\cup \mathcal{V}\cup\mathcal{W}$ be a signature and let \theory be a theory. A
\emph{\theory-state} $s = \pair{M,\mu}$ is a pair made of a $\Sigma$-structure
$M \in \theory$ and a variable evaluation function $\mu : \mathcal{V} \to
\dom(M)$ assigning to each variable in $\mathcal{V}$ a value in the domain of
$M$. %We denote with $S(\theory)$ the set of \theory-states.

A word $\sigma = \seq{(M,\mu_0), \dots, (M,\mu_{n-1})}$ over the theory \theory
is a finite sequence of \theory-states over the same first-order structure.  We
define $\abs{\sigma} = n$.  It is worth noticing that any two states can differ
only in the variable evaluation functions, and \emph{not} in their domain nor in
the interpretations of the other symbols, which are rigidly interpreted. In the
context of first-order \LTL, this is called \emph{constant domain semantics}
\cite{hodkinson2000decidable}.

Given a term $t$, a word $\sigma=\seq{(M,\mu_0),\ldots,(M,\mu_{n-1})}$, an
integer $0\le i < n$, and a variable evaluation function $\xi:\mathcal W\to
\dom(M)$ for the variables in $\mathcal{W}$, the \emph{evaluation} of $t$ at
the instant $i$ on the trace $\sigma$ with \emph{environment} $\xi$, denoted
$\eval{t}_{\sigma,\xi}^i$, is the following:
\begin{enumerate}
  \item $\eval{x}_{\sigma,\xi}^i=\begin{cases}
    \mu_i(x) & \text{if $x\in\mathcal V$} \\
    \xi(x) & \text{if $x\in\mathcal W$}
  \end{cases}$
  \item $\eval{\nextvar x}^i_{\sigma,\xi} = 
    \eval{\wnextvar x}^i_{\sigma,\xi} = \mu_{i+1}(x)$
  \item $\eval{c}_{\sigma,\xi}^i=c^M$
  \item $\eval{f(t_1,\ldots,t_k)}_{\sigma,\xi}^i=
    f^M(\eval{t_1}_{\sigma,\xi}^i,\ldots,\eval{t_k}_{\sigma,\xi}^i)$
\end{enumerate}
Intuitively, when evaluating a term, the free variables from $\mathcal V$ are
evaluated according to the word $\sigma$, while the bound variables from
$\mathcal W$ are evaluated according to the environment~$\xi$. Note that
$\eval{t}_{\sigma,\xi}^i$ is \emph{well-defined} for $\sigma$ only if
\begin{enumerate*}[label=(\arabic*)]
  \item $i<\abs{\sigma}-1$ or %and $t$ contains terms of type $\nextvar x$ or $\wnextvar x$ or
  \item $i=\abs{\sigma}-1$ and  $t$ does not contain terms of type $\nextvar x$ or $\wnextvar x$. %such terms are not present. %ale: controllare
\end{enumerate*}
That is, we cannot evaluate a variable beyond the end of the word.
A first-order formula $\lambda$ is well-defined for $\sigma$ only if
$\eval{t}_{\sigma,\xi}$ is well-defined for all the terms $t$ appearing in $\psi$.
Given a variable evaluation function $\xi:\mathcal W\to\dom(M)$, we denote as
$\xi[x\leftarrow v]$ the function that agrees with $\xi$ except that
$\xi(x)=v$.

Given a theory \theory, the \emph{satisfaction modulo \theory} of a
\emph{first-order} formula $\psi$ over the word $\sigma$ at time point $i \in \N$
with environment $\xi$, denoted with $\sigma,\xi,i \models \psi$, is
inductively defined as:\fitpar
\begin{enumerate}
  \item $\sigma,\xi,i\models p(t_1,\ldots,t_k)$ is defined depending on whether
    terms of type $\nextvar x$ appear in $t_1,\ldots,t_k$:
  \begin{enumerate}
    \item if $\nextvar x$ appear in $t_1,\ldots,t_k$ for some variable $x$, then
          $\sigma,\xi,i\models p(t_1,\ldots,t_k)$ iff
          $\eval{t_1}^i_{\sigma,\xi},\ldots,\eval{t_k}^i_{\sigma,\xi}$ are
          well-defined and
          $(\eval{t_1}^i_{\sigma,\xi},\ldots,\eval{t_k}^i_{\sigma,\xi})\in p^M$;
    \item otherwise, $\sigma,\xi,i\models p(t_1,\ldots,t_k)$ if and only if at least
          one in $\eval{t_1}^i_{\sigma,\xi},\ldots,\eval{t_k}^i_{\sigma,\xi}$ is
          not well-defined or
          $(\eval{t_1}^i_{\sigma,\xi},\ldots,\eval{t_k}^i_{\sigma,\xi})\in p^M$;
  \end{enumerate}
  \item $\sigma,\xi,i \models \neg p(t_1,\ldots,t_k)$ iff 
    $\sigma,\xi,i \not\models p(t_1,\ldots,t_k)$;
  \item $\sigma,\xi,i \models \lambda_1 \lor \lambda_2$ iff 
    $\sigma,\xi,i\models\lambda_1$ or $\sigma,\xi,i\models\lambda_2$;
  \item $\sigma,\xi,i \models \lambda_1 \land \lambda_2$ iff 
    $\sigma,\xi,i\models\lambda_1$ and $\sigma,\xi,i\models\lambda_2$;
  \item $\sigma,\xi,i \models \exists x.\psi$ iff there exists a value
    $v\in\dom(M)$ such that $\sigma,\xi[x\leftarrow v],i\models \psi$;
  \item $\sigma,\xi,i \models \forall x.\psi$ iff for all values
    $v\in\dom(M)$ it holds that $\sigma,\xi[x\leftarrow v],i\models \psi$;
\end{enumerate}

Note that Items~1a and 1b above are those where the semantics of $\nextvar x$
and $\wnextvar x$ terms differ. In the first case, the next state must exist for
the atom to hold. In the second case, the atom holds by definition if the next
state does not exist. Given a theory \theory, the \emph{satisfaction modulo
\theory} of an \LTLfMT formula $\phi$ over the word $\sigma$ at time point
$i\in\N$, denoted as $\sigma,i\models\phi$, is inductively defined as follows:

\begin{enumerate}
  \item $\sigma,i\models \lambda$, where $\lambda$ is a first-order formula, if 
    $\sigma,\xi,i\models \lambda$ for some $\xi$;
  \item $\sigma,i \models \ltl{\phi_1 || \phi_2}$ iff $\sigma,i \models \phi_1$
    or $\sigma, i \models \phi_2$;
  \item $\sigma,i \models \ltl{\phi_1 && \phi_2}$ iff $\sigma,i \models \phi_1$
    and $\sigma, i \models \phi_2$;
  \item $\sigma,i \models \ltl{X \phi}$ iff $i<\abs{\sigma}-1$ and $\sigma,i+1
  \models \phi$;
  \item $\sigma,i\models \ltl{wX \phi}$ iff $i=\abs{\sigma}-1$ or
  $\sigma,i+1\models\phi$;
  \item $\sigma,i \models \ltl{\phi_1 U \phi_2}$ iff there exists a $j \ge i$
    such that $\sigma,j \models \phi_2$ and $\sigma,k \models \phi_1$ for all
    $i \le k < j$;
  \item $\sigma,i\models \ltl{\phi_1 R \phi_2}$ iff either
    $\sigma,j\models\phi_2$ for all $i\le j<\abs{\sigma}$, or there exists $k\ge
    i$ such that $\sigma,k\models\phi_1$ and $\sigma,j\models\phi_2$ for all
    $i\le j\le k$.
\end{enumerate}

We say that $\sigma$ \emph{satisfies $\phi$ modulo \theory} iff
$\sigma,0\models\phi$. The \emph{language modulo \theory} of $\phi$ is the set
of words $\sigma$ over \theory that satisfy $\phi$. A formula $\phi$ with no
temporal operators and no $\nextvar x$ or $\wnextvar x$ terms is a \emph{purely
first-order formula}. If a word $\sigma$ satisfies such a $\phi$, only the first
state is involved in its satisfaction. In this case we can write $s\models\phi$.
An atom $p(t_1,\ldots,t_k)$ is \emph{strong} if it contains at least a $\nextvar
x$ term. It is \emph{weak} if it contains $\wnextvar x$ terms but no $\nextvar
y$ terms. 

% Intuitively, a weak atomic formula talks about the next
% state but does not require it to exist.

%Moreover, in this case,
%the evaluation of terms $t$ appearing in $\phi$ is denoted $\eval{t}_{s,\xi}$.

The satisfiability checking problem for \LTLfMT is easily seen to be
undecidable, depending on the theory. For example, with the LIA theory one can
easily encode the \textsc{PlanEx}-$(\mathcal{C}_{p+}, \mathcal{C}_\emptyset,
\mathcal{E}_{+1})$ \emph{numeric planning} problem, proved to be undecidable by
Helmert~\cite{Helmert02}. However, for suitable theories (see \cref{sec:tableau}), it is semi-decidable.
%as we show in
 %later. %

\paragraph{Expressiveness}

It is interesting at this point to wonder how much expressiveness we gain by
extending the classical (propositional) \LTLf \cite{DeGiacomoV13,DeGiacomoDM14}
in the way we previously shown. As \LTLf is a propositional logic, it is
trivially subsumed by \LTLfMT with a theory $\mathcal B$ consisting only of the
equality relation and a Boolean domain ($\{0,1\}$). One may wonder, however,
%ale: vedere cosa fare con i predicati unari e zeroari
how expressive \LTLfMT over other theories is with regards to $\mathcal B$ (\ie
to \LTLf), when we abstract its models to propositional finite words. More
formally, for a signature $\Sigma = \mathcal{S} \cup \mathcal{P} \cup
\mathcal{C} \cup \mathcal{F} \cup \mathcal{V}\cup\mathcal{W}$ we consider
\emph{unary} predicates from $\mathcal P$ as propositional symbols, and given
a word $\sigma=\seq{s_0,\ldots,s_{n-1}}$ we define its \emph{Boolean
abstraction} $\mathbb B(\sigma)=\seq{\mathbb B(s_0),\ldots,\mathbb B(s_{n-1})}$
as follows: $P\in\mathbb B(s_i)$ iff $s_i\models \exists x P(x)$.
For any theory \theory, the Boolean abstraction of a language modulo \theory is
the set of the Boolean abstractions of all the words of the language. We can
prove the following.
\begin{theorem}
  There are languages definable in \LTLfMT whose Boolean abstraction cannot be 
  defined in \LTLf.
\end{theorem}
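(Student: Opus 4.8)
The plan is to exhibit a single \LTLfMT formula $\phi$ over linear integer arithmetic whose Boolean abstraction is a language that \emph{counts}, and hence is not definable in \LTLf. I would rely on the classical characterization of \LTLf as exactly the \emph{star-free} (equivalently counter-free, equivalently first-order definable) regular languages over finite words: to prove non-definability it then suffices to show the abstraction is regular but not star-free. The sharpest target is the canonical counting example, namely a parity constraint on the \emph{length} of the word, because over a single-letter alphabet the star-free languages are exactly the finite and the co-finite ones; thus any language whose set of lengths is the even numbers is regular yet provably not star-free.

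A structural observation shapes the construction. Since predicate symbols are interpreted \emph{rigidly} --- the states of a word differ only in the evaluation $\mu_i$ of the variables in $\mathcal V$ --- the sentence $\exists z\,P(z)$ has the same truth value in every state of a fixed word. Therefore the Boolean abstraction of any word is a \emph{constant} word $a^{\abs\sigma}$, and the abstraction of a whole language is determined by the fixed letter $a$ together with the set of lengths the language realizes. In particular, no separation from \LTLf can come from a propositional pattern laid out along the trace; it must come from counting, which is precisely what star-free languages cannot do. This is why the parity-of-length target is the right one.

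Concretely, I take one free variable $x\in\mathcal V$ as a parity bit and one unary predicate $P$, and set
\[
  \phi \;\coloneqq\; \ltl{x = 0 & G(\exists z\, P(z)) & G((x = 0 & \wnextvar x = 1) || (x = 1 & \wnextvar x = 0)) & G(X \top || x = 1)}.
\]
The first conjunct initializes the parity; the third forces $x$ to alternate $0,1,0,1,\dots$ at every state that has a successor, so that the value of $x$ at position $i$ equals $i \bmod 2$; and the last conjunct, noting that $\ltl{X}\top$ is false only at the final state, forces $x = 1$ there. Requiring $x = 1$ at the final position $\abs\sigma - 1$ is thus equivalent to $\abs\sigma$ being even, so $L(\phi)$ is exactly the set of even-length traces (and every even length is realized, e.g.\ by $\mu_i(x) = i \bmod 2$ over the integers with $P^M$ nonempty). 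The conjunct $\ltl{G(\exists z\, P(z))}$ only forces $P^M\neq\emptyset$, so every model abstracts to $\{P\}^{\abs\sigma}$. Hence $\mathbb{B}(L(\phi)) = \{\,\{P\}^n \mid n \text{ even}\,\}$, a single-letter language that is neither finite nor co-finite, hence not star-free, hence not \LTLf-definable.

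\textbf{Main obstacle.} The step needing the most care is making the parity counter terminate in lock-step with the length under the finite-trace semantics. This hinges on the vacuity convention for \emph{weak} atoms: an atom containing only $\wnextvar$-terms holds at the last state because its terms are not well-defined there. Consequently, at the final state both disjuncts of the flip constraint collapse to their weak atom holding vacuously and the counter is left unconstrained going forward; it is solely the last conjunct that pins down the parity. One must check that this interaction neither admits an odd-length model (it does not: $x = 0$ is fixed at position $0$ and the flip is genuinely forced at every non-final state, so $x\in\{0,1\}$ alternates throughout) nor rules out any even-length one. Granting this, the only external ingredient is the classical identity \LTLf $=$ star-free over finite words together with the elementary classification of single-letter star-free languages; modulo that result the separation is complete. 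Stronger separations are plausibly available --- the undecidability of \LTLfMT suggests length sets well beyond the regular ones --- but the parity example already suffices.
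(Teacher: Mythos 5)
Your proof is correct, but it takes a genuinely different route from the paper's. The paper also works over LIA with one uninterpreted unary predicate, but its witness is Wolper's language ``$P$ holds at every even \emph{position}'': it uses the counter formula $x=0 \land \ltl{G}(\wnextvar x = x+1 \land (\exists y(x=y+y)\to P(x)))$, which forces $P$ on all even counter values, and then invokes Wolper's theorem that this positional pattern is not \LTLf-definable. You instead target \emph{length} parity: a $\wnextvar$-driven parity bit together with the end-of-trace test $\ltl{X}\top$ pins the trace length to be even, and you conclude via the identity \LTLf $=$ star-free plus the classification of one-letter star-free languages. Both arguments hinge on the same mechanism ($\wnextvar x$ propagating information across states) and on a classical inexpressibility fact, so neither is more elementary. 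What your version buys is robustness with respect to the paper's literal definition of $\mathbb B$: as you correctly observe, predicates are rigidly interpreted, so $s_i\models\exists x\,P(x)$ has the same truth value at every position of a fixed word and every Boolean abstraction is a constant word $a^{\abs{\sigma}}$. Read literally, this makes the paper's own witness abstract to $\{\{P\}^n : n\ge 1\}$, which \emph{is} \LTLf-definable; the paper's example yields the Wolper language only if the abstraction is read as the per-state test $s_i\models P(x)$ on the program variable $x$ rather than the existential sentence. Your length-parity construction depends only on the set of realized lengths and is immune to this issue. One small point to tighten: the step ``a one-letter language that is neither finite nor cofinite, hence not star-free'' should be justified over the actual alphabet $2^{\{P\}}$, e.g.\ by intersecting with the star-free language of constant-$\{P\}$ words (star-free languages are closed under intersection) or by noting that the syntactic monoid of $\{\{P\}^{2k} : k\ge 1\}$ contains the group $\mathbb{Z}/2\mathbb{Z}$; this is routine but worth a sentence.
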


\begin{proof}
Consider the theory of Linear Integer Arithmetic (LIA) together
with uninterpreted unary predicates. It is well-known \cite{wolper1983temporal}
that \LTLf cannot express the language made of words where a proposition $p$
appears in at least all \emph{even} positions. Instead, such a language can be
defined as the Boolean abstraction of the language modulo LIA recognized by the
following \LTLfMT formula:
\begin{equation}
  x=0 \land \ltl{G( \wnextvar x = x+1 \land
    (\exists y(x=y+y) \to P(x)))}\tag*{\qedhere}
\end{equation}
\end{proof}

Note the essential role of the $\wnextvar x$ term here. Without such terms, the
satisfaction of first-order formulas in different time steps of a \LTLfMT model
would be completely unrelated, which would allow us to abstract them into
proposition symbols. As a matter of fact, we can prove the following.
\begin{restatable}{theorem}{booleanabstraction}
  The Boolean abstraction of any language definable in \LTLfMT \emph{without}
  $\nextvar x$ nor $\wnextvar x$ terms can also be defined in \LTLf, and
  \viceversa.
\end{restatable}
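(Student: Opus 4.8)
The plan is to prove the two implications separately, isolating in each case the role played by the absence of the $\nextvar x$ and $\wnextvar x$ term constructors, which — as the preceding remark suggests — is exactly what makes the first-order content of distinct states temporally independent.

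For the converse direction (every \LTLf language is such a Boolean abstraction), I would start from an arbitrary \LTLf formula $\psi$ over propositions $p_1,\dots,p_k$ and build a next-free \LTLfMT formula realizing it. I would work over a theory \theory equipped with $k$ uninterpreted unary predicates $P_1,\dots,P_k$ over an infinite domain, and let $\phi$ be obtained from $\psi$ by replacing each $p_j$ with the atom $\exists x\,P_j(x)$, leaving the temporal skeleton untouched. Since the finite-trace semantics of $\ltl{X}$, $\ltl{wX}$, $\ltl{U}$, and $\ltl{R}$ given in \cref{sec:ltlfmt} coincide with those of \LTLf, a structural induction yields $\sigma\models\phi$ iff $\mathbb{B}(\sigma)\models\psi$, which gives soundness, \ie $\mathbb{B}(L_\theory(\phi))\subseteq L(\psi)$. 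For the reverse inclusion I would show that every word $w\in L(\psi)$ is realizable: fixing a single structure $M$ in which the $P_j$ are free and choosing, position by position, an assignment $\mu_i$ that places a witness in $P_j$ exactly when $p_j\in w_i$, produces a $\sigma$ with $\mathbb{B}(\sigma)=w$. The freedom of the uninterpreted predicates is what is used here.

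The forward direction is the substantive one. Given a next-free \LTLfMT formula $\phi$, let $\lambda_1,\dots,\lambda_m$ be its maximal first-order subformulas and let $\phi^\star$ be the propositional \LTLf formula over fresh letters $p_1,\dots,p_m$ obtained by substituting $p_j$ for $\lambda_j$. The key observation, to be stated as a lemma, is that without $\nextvar x$ or $\wnextvar x$ terms the truth of each $\lambda_j$ at a state $s_i$ depends only on $s_i$ — by the evaluation rules it is a purely local condition — so it behaves like a genuine proposition. Making this precise through an ``atom trace'' $\tau(\sigma)$ whose $i$-th letter is $\{p_j : s_i\models\lambda_j\}$, I would prove $\sigma\models\phi$ iff $\tau(\sigma)\models\phi^\star$, so that the temporal behaviour of $\phi$ is faithfully captured by the star-free language $L(\phi^\star)$.

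The remaining, and hardest, step is to descend from $L(\phi^\star)$ to the genuine Boolean abstraction $\mathbb{B}(L_\theory(\phi))$, which lives over the coarser alphabet of unary predicates: this is an existential projection, and projection does not preserve \LTLf-definability (star-freeness) in general, so this is where I expect the main difficulty to lie. My intended resolution is to exploit precisely the per-position independence granted by the lack of next terms: since the first-order content of each state can be chosen freely and independently, the set of realizable atom traces over any fixed model is the \emph{full} set of words over its realizable letters, and the abstraction relates atom letters to unary-predicate letters by a single, position-independent relation. I would try to leverage this to reduce the projection to a letter-to-letter map acting uniformly across positions, and then argue that its interaction with the star-free $L(\phi^\star)$ keeps the image \LTLf-definable. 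Pinning down the exact correspondence between the maximal first-order subformulas of $\phi$ and the unary predicates observed by $\mathbb{B}$, and verifying that the resulting map cannot introduce the kind of periodicity that would escape \LTLf, is the step I expect to require the most care.
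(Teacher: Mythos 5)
Your converse direction matches the paper's (which simply notes that \LTLf is subsumed by \LTLfMT) and is fine. The forward direction, however, has a genuine gap exactly where you yourself flag it: after building $\phi^\star$ over fresh letters $p_1,\dots,p_m$ you must pass from $L(\phi^\star)$ to $\mathbb{B}(\lang(\phi))$, and you only describe hopes for why this positionwise projection should preserve \LTLf-definability. Your own worry is well-founded and is not discharged by ``per-position independence'': the letter-to-letter image of a star-free language need not be star-free (\eg $(aa)^*$ is the image of the star-free $(ab)^*$ under the map sending both letters to $a$), so appealing to a ``uniform letter-to-letter relation acting on the star-free $L(\phi^\star)$'' cannot, by itself, close the argument. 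The missing idea is \emph{where} the projection is performed.

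The paper avoids any language-level projection by working syntactically and locally. For each maximal first-order subformula $\psi$ of $\phi$ it defines $\mathbb{B}(\psi)=\set{\mathbb{B}(s)}_{s\models\psi}$, which is \emph{finite} because $2^{\mathcal{AP}_{\Sigma}}$ is finite, and substitutes for $\psi$ the propositional formula $\psi_{\mathbb{B}}=\bigvee_{S\in\mathbb{B}(\psi)}\bigwedge_{P\in S}P$ directly over the abstraction alphabet $\mathcal{AP}_{\Sigma}$; correctness then reduces to the per-state equivalence ($\mathbb{B}(s)\models\psi_{\mathbb{B}}$ iff $s\models\psi$) plus a routine induction on the temporal skeleton, since \LTLf is trivially closed under replacing letters by propositional formulas. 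If you want to rescue your route, the ingredient you are missing is that the first-order subformulas occur \emph{positively} in the negation-normal-form grammar of $\phi$, which is what lets the existential projection be pushed inside to each occurrence of $p_j$ individually and realized as the finite disjunction above; your ``letter-to-letter map'' must act on the syntax of $\phi^\star$, not on the language $L(\phi^\star)$. (The $(aa)^*$ counterexample is exactly a case where the implicit mutual exclusion of letters introduces negation and blocks this.)
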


Regarding $\nextvar x$ and $\wnextvar x$ terms, it should be noted that the
\LTLfMT syntax only allows these term constructors to be applied to single
variables. However, the syntax may be extended to arbitrary nesting of
$\nextvar$ and $\wnextvar$ operators with a simple translation maintaining the
equisatisfiability. For instance, the formula $\ltl{x=1 & \nextvar x = 1 &
G(\wnextvar\wnextvar x = \wnextvar x + x)}$, expressing the Fibonacci sequence,
is equisatisfiable to $\ltl{x = 1 & \nextvar x = 1 & y = 1 & G(y = \wnextvar x &
\wnextvar y = y + x)}$.

%!TEX root = ../ijcai22.tex

\newcommand{\EUF}{\ensuremath{\mathcal{EUF}}\xspace}

\section{Use cases}
\label{sec:use-cases}

%!TEX root = ../ijcai22.tex

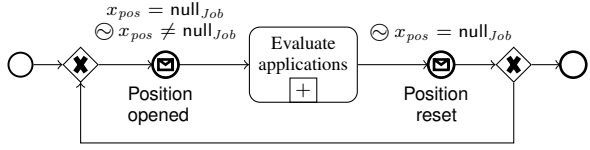
\begin{figure}
  \centering
  \begin{tikzpicture}[xscale = 0.55]
    \node[EndEvent] (se) at (0,0) {};
    
    \path (se.east) ++(0.7,0) node[anchor=west,ExclusiveGateway,draw] (xor1) {};

    \path (xor1.east) ++(1.3,0) node[anchor=west,MessageStartEvent,draw] (msg) 
    {};

    \path (msg.north) node[anchor=south,font=\scriptsize] 
      {$\begin{array}{c}
        x_{\mathit{pos}}=\mathsf{null}_{\mathit{Job}}\\
        \nextvar\kern-1.35em\sim x_{\mathit{pos}} \neq \mathsf{null}_{\mathit{Job}}
        \end{array}
      $};

    \path (msg.south) node[anchor=north,font=\scriptsize\sffamily,text width=1cm] {Position\\opened};

    \path (msg.east) ++(1.7,0) node[draw,anchor=west,font=\scriptsize,minimum height=1cm, text width=1.2cm, rounded corners] (eval) 
    { };
    \path (eval.north) node[anchor=north,font=\scriptsize,align=center,text width=1.2cm] {Evaluate\\applications};
    
    \node[draw,anchor=south,font=\scriptsize, inner sep=2pt] at (eval.south) { $+$ };

    \path (eval.east) ++(1.7,0) node[MessageStartEvent,draw,anchor=west] (undef) {};
    
      \path (undef.south) node[anchor=north,font=\scriptsize\sffamily,text width=1cm] {Position\\\,\,\,\,reset};

    \path (undef.north) node[anchor=south,font=\scriptsize] 
      {$\nextvar\kern-1.35em\sim x_{\mathit{pos}}=\mathsf{null}_{\mathit{Job}}$};

    \path (undef.east) ++(1,0) 
      node[ExclusiveGateway,draw,anchor=west] (xor2) {};
    
    \path (xor2.east) ++(1,0) node[StartEvent] (ee) {};

    \draw[->] (se) -- (xor1);
    \draw[->] (xor1) -- (msg);
    \draw[->] (msg) -- (eval);
    \draw[->] (eval) -- (undef);
    \draw[->] (undef) -- (xor2);
    \draw[->] (xor2) -- (ee);

    \draw[->] (xor2.south) -- ++(0,-0.8) -| (xor1.south) -- (xor1.south);
              
  \end{tikzpicture}
  \caption{Example business process model of a job application}\label{fig:bpmn}
\end{figure}

\paragraph{Verification of data-aware processes}
We take inspiration from verification of data-aware processes  to present an
interesting class of case studies. Specifically, we rely on the framework by
Calvanese \etal~[\citeyear{MSCS20,JAR20}], who introduced a general model of
transition systems interacting with relational databases, with the distinctive
feature that the transitions can query the content of the persistent data
storage. As shown there, the theory of \emph{equality and uninterpreted
functions} (EUF) provides an algebraic formalization of relational databases
with primary and foreign key dependencies: in particular, unary functions map
the primary key attribute of a relation schema to another attribute of the same
relation, and implicitly represent key dependencies. If EUF is combined with an
arithmetical theory such as LRA (e.g., following \cite{ijcar20ext,JAR22}), complex
datatype values can be injected into the database and used to model
non-deterministic inputs from external users \cite{BPM19}. A symbolic transition
system $\mathit{Sys}$ operating over databases can be formalized by means of:
(i) a sequence of variable symbols $x_1,\dots,x_n$ from $\mathcal{V}$,
describing the current configuration; (ii) a purely first-order formula
$I(x_1,\dots,x_n)$, describing the initial states; %ale: fin qui weak non e' ancora stato definito
(iii) a weak first-order
formula $Tr(x_1,\dots, x_n)$ describing the transitions from the current
configuration to the next one. Given an \LTLfMT formula $\psi(x_1,\dots, x_n)$,
called \emph{property}, we are interested in establishing whether $I\land
\ltl{G}\,( Tr)\land\psi$ is satisfiable modulo the theory constraining the
database, i.e., $\mathrm{EUF}\cup \mathrm{LRA}$. As an example, %of a property to
%verify,
imagine a job hiring business process modeling the evaluation of the
applications received for a job position: the transition system is intuitively
represented, via the BPMN standard
language,\footnote{\url{https://www.omg.org/spec/BPMN/}} in
Figure~\ref{fig:bpmn}. The following property states that whenever a job
position is opened, eventually it is closed.
\begin{align}
  \ltl{
    G(x_{\mathit{pos}}\neq\mathsf{null}_{\mathit{Job}}\rightarrow
        F(x_{\mathit{status}}=\mathsf{PosClosed})) 
          } 
\end{align}

\paragraph{Data-aware planning}

\LTLfMT satisfiability over LRA provides a straightforward approach to
\emph{numeric planning problems}, which are similarly
undecidable~\cite{Helmert02} but nevertheless very relevant in practice. Even
more interestingly, we can model \emph{data-aware} planning problems, \ie
\emph{planning problems}~\cite{GhallabNT04} where precoditions of actions can
query a relational database. As shown by Cialdea Mayer
\etal~[\citeyear{CialdeaMayerLOP07}], \LTL can encode classical STRIPS-like
planning problems. Similarly, data-aware planning problems can be encoded with
\LTLfMT formulas. The encoding of the planning domain results into a data-aware
transition system similar to the one above, with the property to verify
corresponding to the reachability of the goal.
%!TEX root = ../ijcai22.tex

\section{One-pass tree-shaped tableau for LTLf\textsuperscript{MT}}
\label{sec:tableau}

Here we provide a semi-decision procedure for \LTLfMT satisfiability based on an
adaptation of the one-pass tree-shaped tableau for \LTL by
Reynolds~[\citeyear{Reynolds16a}]. In contrast to classic graph-shaped
tableaux, Reynolds' produces a purely tree-shaped structure where a single pass
is sufficient to either accept or reject a given branch. Reynolds' tableau
proved to be amenable to efficient implementation~\cite{BertelloGMR16} and
parallelization~\cite{McCabeR17}, and to direct extension to real-time
logics~\cite{GeattiGMR21}. A state-of-the-art SAT encoding of Reynolds' tableau
has been implemented in the \BLACK satisfiability
solver~\cite{DBLP:conf/tableaux/GeattiGM19,GeattiGMV21}. Here, we adapt the
tableau system to \LTLfMT, and provide an SMT encoding that we implement in the
\BLACK satisfiability tool.

The \emph{closure} of a formula $\phi$, denoted $\closure(\phi)$, is the set of
all the subformulas of $\phi$ with the addition of $\ltl{X(\psi_1 U \psi_2)}$
for any formula $\ltl{\psi_1 U \psi_2}\in\closure(\phi)$, and $\ltl{wX(\psi_1 R
\psi_2)}$ for any formula $\ltl{\psi_1 R \psi_2}\in\closure(\phi)$.

A tableau for an \LTLfMT formula $\phi$ is a rooted tree where each node $u$ is
labelled by a set of formulas $\Gamma(u)\subseteq\closure(\phi)$. The root $u_0$
is labelled by $\Gamma(u_0)=\set{\phi}$. The tree is built starting from the
root, until all branches have been either accepted or rejected. If at least one
branch is accepted, $\phi$ is satisfiable. The tree is built by applying a set
of \emph{expansion rules} to the formulas in each node's label. Such rules are
listed in \cref{tab:expansion-rules}. When a rule is applied to a formula
$\psi\in\Gamma(u)$, two children $u'$ and $u''$ are added to $u$, where
$\Gamma(u')=\Gamma(u)\setminus\set{\psi}\cup\Gamma_1(u)$ and
$\Gamma(u'')=\Gamma(u)\setminus\set{\psi}\cup\Gamma_2(u)$, unless $\Gamma_2(u)$
is empty, in which case only a single child is added.

\begin{table}
  \begin{tabular}{>{\collectcell\trule}r<{\endcollectcell} fl fl fl} \toprule
    \multicolumn{1}{r}{\textsf{Rule}} &
      \psi\in\tlabel    & \Gamma_1(\psi) & \Gamma_2(\psi)                    \\
    \midrule
    Disjunction  & \alpha\lor\beta  & \set{\alpha}       & \set{\beta}       \\
    Conjunction  & \alpha\land\beta & \set{\alpha,\beta} &                   \\
    Until        & \alpha U\beta    & \set{\beta}        &
                                      \set{\alpha,X(\alpha U\beta)}        \\
    Release      & \alpha R\beta    & \set{\alpha,\beta} &
                                      \set{\beta,wX(\alpha R\beta)} \\
    \bottomrule
  \end{tabular}
  \caption{Expansion rules for the \LTLfMT tableau}
  \label{tab:expansion-rules}
\end{table}

When no expansion rules are applicable to a node, it is called a \emph{poised
node}. A poised node contains only \emph{elementary} formulas, \ie only
first-order, \emph{tomorrow} ($\ltl{X\phi}$) or \emph{weak tomorrow}
($\ltl{wX\phi}$) formulas. A poised node represents a single state in
a tentative model for the formula. So let $\branch=\seq{u_0,\ldots,u_{n-1}}$ be
a branch of the tableau with $u_{n-1}$ a poised node. A temporal step can be
made through the
\trule{Step} rule.
\begin{description}
  \item[\normalfont\trule{Step}] A new child $u_n$ of $u_{n-1}$ is created such 
  that:
    \begin{equation}
      \Gamma(u_n)=\set{\psi\mid\text{%
        $\ltl{X\psi}\in \Gamma(u_{n-1})$ or 
        $\ltl{wX\psi}\in \Gamma(u_{n-1})$}}
    \end{equation} 
\end{description}

By alternating the \trule{Step} rule with the expansion rules we build each
branch of the tree. Then, we need a way to stop the construction when a branch
is ready to be accepted or rejected. Two \emph{termination rules} are provided
for this purpose. The \trule{Contradiction} rule \emph{rejects} a branch when it
presents some contradiction. In order to define this rule, we need some
definitions. Given a branch $\branch=\seq{u_0,\ldots,u_{n-1}}$, with $u_{n-1}$ a
poised node, we define a first-order formula $\Omega(\branch)$ that summarizes
the first-order formulas made true by the branch $\branch$ over the different
time points. This formula is defined over a new alphabet $\Sigma' = \mathcal{P}'
\cup \mathcal{C} \cup \mathcal{F} \cup \mathcal{V}'\cup\mathcal W$ where
$\mathcal{P}'=\mathcal{P}\cup\set{\ell^i | i\in\N}$, and $\mathcal
V'=\set{x^i|x\in\mathcal{V}, i\in\N}$. Here, $\ell^i$ are fresh 0-ary EUF
predicates, and $\mathcal{V}'$ is a set of \emph{stepped} versions of the
variables in $\mathcal{V}$. Given a term $t$, the \emph{stepped} version of $t$
at time $i$ is the term $t^i$ defined as follows:
\begin{enumerate}
  \item $c^i=c$
  \item $x^i = x$ if $x\in\mathcal W$;
  \item $x^i = x^i$ if $x\in\mathcal V$;
  \item $(\nextvar x)^i=(\wnextvar x)^i=x^{i+1}$
  \item $(f(t_1,\ldots,t_k))^i=f((t_1)^i,\ldots,(t_k)^i)$
\end{enumerate}

By extension, given a first-order formula $\psi$, the formula $\psi^i$ is
obtained by replacing each term $t$ in $\psi$ with its stepped version $t^i$.
Given a first-order formula $\phi$, the formula $L_i(\phi)$ is obtained from
$\phi$ by replacing all \emph{strong} atoms $\alpha$ with $\ell^i \land \alpha$,
and all \emph{weak} atoms $\alpha$ with $\ell^i \implies \alpha$. Intuitively,
if $\ell^i$ is true, $i$ is not the last step of the model, hence $L_i(\alpha)$
encodes this aspect of the semantics of $\nextvar x$ and $\wnextvar x$ terms.

For the branch $\branch$, let $\bar\pi=\seq{\pi_0,\ldots,\pi_{m-1}}$ be the
sequence of \emph{poised nodes} of $\branch$. Let $F(\pi_i)$ be the set of
\emph{first-order} formulas of $\tlabel(\pi_i)$. Then we define the
$\Omega(\branch)$ formula as follows:
\begin{equation}
  \Omega(\branch)=\bigwedge_{i=0}^{m-1} 
    \smashoperator[r]{\bigwedge_{\psi\in F(\pi_i)}} (L_i(\psi))^i \land 
    \bigwedge_{i=0}^{m-2}\ell^i
\end{equation}

Intuitively, the purpose of $\Omega(\branch)$, which is a purely first-order
formula over $\mathcal{T}\cup\text{EUF}$, is that of describing the whole
tentative model represented by $\branch$ in a single formula.  Note that, in
$\Omega(\branch)$, the value of $\ell^{m-1}$ is left unconstrained. Hence, if
$\Omega(\branch)$ is unsatisfiable, a contradiction exists independently from
the choice of closing or extending the branch. Now we can define the rule.
\begin{description}
  \item[\normalfont\trule{Contradiction}] The branch $\branch$ is rejected if 
  $\Omega(\branch)$ is unsatisfiable modulo $\mathcal T\cup \text{EUF}$.
\end{description}

Note that the \trule{Contradiction} rule can be easily checked by giving
$\Omega(\branch)$ to an SMT solver over $\mathcal T\cup \text{EUF}$.

The \trule{Empty} rule, instead, \emph{accepts} suitable branches when 
there is no reason to further extend the model.
\begin{description}
  \item[\normalfont\trule{Empty}] If $\Gamma(\pi_{m-1})$ does \emph{not} contain
  \emph{tomorrow} formulas and $\Omega(\branch)\land\neg\ell^{m-1}$ is
  \emph{satisfiable}, the branch is accepted.
\end{description}

We can prove soundness and completeness of the system.
\begin{restatable}[Soundness and completeness]{theorem}{soundcomp}
  Given a \LTLfMT formula $\phi$, the tableau for $\phi$ has an accepted branch
  if and only if $\phi$ is satisfiable.
\end{restatable}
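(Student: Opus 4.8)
The plan is to prove the two directions separately, both resting on a single technical \emph{encoding lemma} relating the satisfiability of $\Omega(\branch)$ to the existence of a word that, at each poised node, satisfies the first-order formulas labelling that node. Concretely, I would show that a $\theory\cup\text{EUF}$-model of $\Omega(\branch)$ in which $\ell^0,\dots,\ell^{m-2}$ are true and $\ell^{m-1}$ is false corresponds exactly to a word $\sigma=\seq{(M,\mu_0),\dots,(M,\mu_{m-1})}$ of length $m$, via $\mu_i(x)=\eval{x^i}$, such that $\sigma,i\models\psi$ for every $\psi\in F(\pi_i)$. The crux is checking that the rewriting $L_i$ together with the flags $\ell^i$ faithfully encodes the finite-trace semantics of the next-constructors: a strong atom is rewritten to $\ell^i\land\alpha$, which is false exactly when $\ell^i$ is false (the last state), matching the requirement that $\nextvar x$ forces a successor; dually, a weak atom becomes $\ell^i\implies\alpha$, vacuously true at the last state, matching the semantics of $\wnextvar x$. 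With this correspondence in hand, each direction reduces to bookkeeping over the tableau structure.

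For soundness, suppose $\branch$ is accepted, so the \trule{Empty} rule fired at $\pi_{m-1}$: its label has no \emph{tomorrow} formulas and $\Omega(\branch)\land\neg\ell^{m-1}$ is satisfiable. The encoding lemma yields a word $\sigma$ of length $m$, and I would then prove by \emph{reverse} induction on the poised-node index $i$ that every formula in $\Gamma(\pi_i)$ holds at time $i$. The base case $i=m-1$ is exactly where the \trule{Empty} conditions are used: the first-order formulas hold by the lemma, there are no $\ltl{X \psi}$ formulas by hypothesis, and every $\ltl{wX \psi}$ holds vacuously since $m-1=\abs{\sigma}-1$. For the inductive step I would push satisfaction backwards through the \trule{Step} rule---an $\ltl{X \psi}$ or $\ltl{wX \psi}$ in $\Gamma(\pi_i)$ is witnessed because its argument lies in the successor node and holds at time $i+1$ by the induction hypothesis---and then upwards through the expansion rules, each of which is \emph{locally sound}: whenever the chosen child's formulas hold at time $i$, so does the expanded parent (e.g.\ for \trule{Until}, either $\beta$ at $i$, or both $\alpha$ and $\ltl{X(\alpha U \beta)}$ at $i$, each entail $\ltl{\alpha U \beta}$ at $i$). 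Since the root carries $\phi$ at level $0$, this gives $\sigma,0\models\phi$.

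For completeness, assume $\sigma,0\models\phi$ with $\abs{\sigma}=n$ and build a branch \emph{guided by} $\sigma$, maintaining the invariant that every formula in the current label is satisfied at the current time point. The invariant holds at the root and is preserved by each rule: at a disjunctive, \trule{Until}, or \trule{Release} choice I select the child whose formulas $\sigma$ makes true at the current time (such a child always exists by the satisfaction clauses), while the \trule{Step} rule moves to time $i+1$, where the arguments of the \emph{tomorrow} and \emph{weak tomorrow} formulas hold. Expansion terminates at each level because every rule replaces a formula by strict subformulas plus elementary next-formulas, so a poised node is always reached. The \trule{Contradiction} rule can never fire along this branch, since for every prefix the values $\mu_i(x)$ furnish a model of $\Omega$; and after exactly $n$ poised nodes we reach $\pi_{n-1}$ at the final time point, where no $\ltl{X \psi}$ formula can be present (none is satisfiable at $\abs{\sigma}-1$) and, taking $\ell^i$ true for $i<n-1$ and $\ell^{n-1}$ false, $\Omega(\branch)\land\neg\ell^{n-1}$ is satisfied by $\sigma$ through the encoding lemma. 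Hence \trule{Empty} accepts the branch.

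The main obstacle I anticipate is the encoding lemma itself, specifically the case analysis at the \emph{boundary} of the trace: one must verify that $L_i$ and the single unconstrained flag $\ell^{m-1}$ interact correctly with the well-definedness conditions of the semantics, so that strong atoms are forced to fail and weak atoms to hold at the last state, while present-only atoms are evaluated exactly as membership in $p^M$. A secondary subtlety is confining the termination argument to the finitely many formulas generated between two consecutive \trule{Step} applications, so as to establish that each level terminates in a poised node \emph{without} inadvertently claiming termination of the whole procedure---which, for unsatisfiable instances, is exactly what must not hold.
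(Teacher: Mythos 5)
Your proposal is correct and follows essentially the same route as the paper: both directions hinge on the correspondence (via the stepped variables and the $L_i$/$\ell^i$ rewriting) between models of $\Omega(\branch)\land\neg\ell^{m-1}$ and finite words satisfying the first-order labels of the poised nodes, with completeness proved by a model-guided descent maintaining the same invariant and soundness by extracting a word from the satisfying assignment. The only difference is presentational --- you organize the soundness argument as a reverse induction over poised-node indices plus local soundness of each expansion rule, where the paper phrases it as a structural induction over the formulas in $\Gamma^*(u_i)$ --- and your explicit attention to the boundary behaviour of strong versus weak atoms at the last state is exactly the case analysis the paper carries out.
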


If $\mathcal T\cup \text{EUF}$ is decidable, the  breadth-first construction of the tableau
tree provides a semi-decision procedure for \LTLfMT satisfiability, as it always
finds at least an accepted branch when given a satisfiable formula. %ale: questa frase segue dalle nostre considerazioni, ma secondo me non e' chiarissimo qui per un lettore dove la decidibilita della teoria T svolga il ruolo di rendere la procedura semi-decidibile, si riesce ad aggiungere qualcosina nicola?

Despite the generality of our definitions, it is clear that our approach to
\LTLfMT satisfiability is applicable over  decidable theories %(decidable in the combination with free symbols) 
supported by the
underlying SMT solvers. This is the case for most quantifier-free fragments of
the supported theories and their combinations, and, in some cases (such as LRA
and LIA), also in the quantified case.

The procedure may also terminate on some unsatisfiable formulas, such as those
where the unsatisfiability comes from theory contradictions, such as $\ltl{x=3 &
G(\exists y (x = y + y))}$. %ale: ripetizione del 'such as', metterei 'like' nella seconda occorrenza (prima della formula)
It cannot terminate instead on formulas where the unsatisfiability comes from
unsatisfiable temporal requests. An example of such formulas is $\ltl{G(x>3) &
F(x < 2)}$, whose tableau contains an infinite branch which tries to fulfill the
$\ltl{F(x < 2)}$ eventuality at each step. In Reynolds' tableau for
\LTL~\cite{Reynolds16a}, closing such branches is the purpose of the
\trule{Prune} rule, which however cannot be applied in \LTLfMT (otherwise we
would get a decision procedure for an undecidable problem). It is worth noticing
that the \trule{Prune} rule can be made to work if we do not allow $\nextvar x$
and $\wnextvar x$ terms. That is, a decision procedure is possible for the
fragment of \LTLfMT without $\nextvar x$ nor $\wnextvar x$ terms, which, however,
as we have seen, is far less expressive.

%!TEX root = ../ijcai22.tex

\section{The SMT encoding}
\label{sec:encoding}

% This goes here to place the table correctly
%!TEX root = ../ijcai22.tex

\begin{table*}[!t]
  \small\centering
  \setlength{\sparkspikewidth}{1.5pt}
  \begin{tabular}{l @{\hspace{0.5em}}l @{\hspace{0.5em}}l @{\hspace{0.5em}}l}
    \toprule
    Theory & Formula & Result & Behavior \\\midrule
    LIA    & $\ltl{x = 0 & G(\wnextvar x = x + 1) & F(x = N)$} & SAT &
              \begin{sparkline}{8}
                \sparkspike 0.0 0.1
                \sparkspike 0.1 0.1
                \sparkspike 0.2 0.1
                \sparkspike 0.3 0.1
                \sparkspike 0.4 0.1
                \sparkspike 0.5 0.16
                \sparkspike 0.6 0.25
                \sparkspike 0.7 0.4
                \sparkspike 0.8 0.65
                \sparkspike 0.9 0.89
                \sparkspike 1.0 1.0
              \end{sparkline}
              {\tiny\sffamily\raisebox{1ex}{
              \begin{tabular}{@{}l@{}}
                4min50s\\
                N=2590
              \end{tabular}
              }
              }\\
          %  & $\ltl{
          %    x_0>0 & \bigwedge_{i=0}^{N-1}X^i(\nextvar x_{i+1} > x_i) &
          %    G(\bigwedge_{i=0}^N \wnextvar x_i = x_i) &
          %    F(\sum_{i=0}^{N-1} x_i = \frac{N(N-1)}{2} & wX \bot)
          %  }$ & SAT &
          %   {\tiny\sffamily 0.5s}
          %   \begin{sparkline}{8}
          %     \sparkspike 0   0.1 
          %     \sparkspike 0.1 0.2 
          %     \sparkspike 0.2 0.3 
          %     \sparkspike 0.3 0.4
          %     \sparkspike 0.4 0.5
          %     \sparkspike 0.5 0.6
          %     \sparkspike 0.6 0.7
          %     \sparkspike 0.7 0.8
          %     \sparkspike 0.8 0.9
          %     \sparkspike 0.9 1.0
          %     \sparkspike 1.0 1.1
          %   \end{sparkline}
          %   {\tiny\sffamily\raisebox{2ex}{5s}}
          %  \\[1ex]
           & $\ltl{
             x_0>0 & \bigwedge_{i=0}^{N-1}X^i(\nextvar x_{i+1} > x_i) &
             G(\bigwedge_{i=0}^N \wnextvar x_i = x_i) &
             G(\sum_{i=0}^{N-1} x_i = \frac{N(N-1)}{2}-1)
           }$ & UNSAT &
           \begin{sparkline}{8}
              \sparkspike 0.0 0.1
              \sparkspike 0.1 0.1 
              \sparkspike 0.2 0.1
              \sparkspike 0.3 0.1 
              \sparkspike 0.4 0.1 
              \sparkspike 0.5 0.1
              \sparkspike 0.6 0.1
              \sparkspike 0.7 0.1
              \sparkspike 0.8 0.1
              \sparkspike 0.9 0.4
              \sparkspike 1.0 1.0 
            \end{sparkline}
            {\tiny\sffamily\raisebox{1ex}{
              \begin{tabular}{@{}l@{}}
                4min30s\\
                N=57
              \end{tabular}
            }}
           \\\midrule
    LRA    & $\ltl{
              c = 1 & G(\wnextvar c = 10c) & X^N(x = c & G(\wnextvar x = \frac{x}{10}) & F(x = 1))
            }$ & SAT &
            \begin{sparkline}{8}
              \sparkspike 0.0 0.1
              \sparkspike 0.1 0.1
              \sparkspike 0.2 0.1
              \sparkspike 0.3 0.1
              \sparkspike 0.4 0.1
              \sparkspike 0.5 0.1
              \sparkspike 0.6 0.1
              \sparkspike 0.7 0.18
              \sparkspike 0.8 0.31
              \sparkspike 0.9 0.65
              \sparkspike 1.0 1.0
            \end{sparkline}
            {\tiny\sffamily\raisebox{1ex}{
              \begin{tabular}{@{}l@{}}
                4min43s\\
                N=234
              \end{tabular}
            }}
            \\[0.5ex]
           & $\begin{array}{@{}r@{}l@{}}
                c = 1 \land{} & 
                \ltl{G(\wnextvar c = 10c) & e = 1 & x = 0} \\
                {}\land{} & \ltl{
                  X^N\big(g = c & 
                  G(\wnextvar e = \frac{e}{2} & \wnextvar x = x + e & 0 \le 
                  x < 2) &
                  F(x > 2-\frac{1}{g})\big)}
              \end{array}$ & SAT &
            \begin{sparkline}{8}
              \sparkspike 0.0 0.1
              \sparkspike 0.1 0.1
              \sparkspike 0.2 0.1
              \sparkspike 0.3 0.1
              \sparkspike 0.4 0.1
              \sparkspike 0.5 0.1
              \sparkspike 0.6 0.1
              \sparkspike 0.7 0.1
              \sparkspike 0.8 0.17
              \sparkspike 0.9 0.56
              \sparkspike 1.0 1.0
            \end{sparkline}
            {\tiny\sffamily\raisebox{1ex}{
              \begin{tabular}{@{}l@{}}
                4min46s\\
                N=113
              \end{tabular}
            }}
            \\\midrule
    % EUF    & $\ltl{
    %            x = c & G(\wnextvar c = c) & 
    %            G(f(\wnextvar x) = x & \wnextvar x = g(x)) &
    %            X^N(fg^{N+1}(c) \ne g^N(c))
    %          }$ & UNSAT &
    %          {\tiny\sffamily 0.5s}
    %         \begin{sparkline}{8}
    %           \sparkspike 0   0.1 
    %           \sparkspike 0.1 0.2 
    %           \sparkspike 0.2 0.3 
    %           \sparkspike 0.3 0.4
    %           \sparkspike 0.4 0.5
    %           \sparkspike 0.5 0.6
    %           \sparkspike 0.6 0.7
    %           \sparkspike 0.7 0.8
    %           \sparkspike 0.8 0.9
    %           \sparkspike 0.9 1.0
    %           \sparkspike 1.0 1.1
    %         \end{sparkline}
    %         {\tiny\sffamily\raisebox{2ex}{5s}}\\\midrule
    EUF+LIA & $ n = 0 \land c \ge 0 \land 
                \ltl{G}\left(\begin{array}{@{}l@{}}
                  \wnextvar c = c \land {}\\
                  \wnextvar n = n+1
                \end{array}\right) \land
                \ltl{G}\left(\begin{aligned}
                  n>1 &\to f(n) = 2f(n-1)+c \land {}\\
                  n=1 &\to f(n) = c
                \end{aligned}\right) \land \ltl{X^N(wX\bot)}
              $ & SAT &
                \begin{sparkline}{8}
                  \sparkspike 0.0 0.1
                  \sparkspike 0.1 0.1
                  \sparkspike 0.2 0.1
                  \sparkspike 0.3 0.1
                  \sparkspike 0.4 0.1
                  \sparkspike 0.5 0.1
                  \sparkspike 0.6 0.13
                  \sparkspike 0.7 0.29
                  \sparkspike 0.8 0.41
                  \sparkspike 0.9 0.59
                  \sparkspike 1.0 1.0
                \end{sparkline}
                {\tiny\sffamily\raisebox{1ex}{
                  \begin{tabular}{@{}l@{}}
                    4min28s\\
                    N=387
                  \end{tabular}
                }}\\\bottomrule
            % & $\begin{aligned}
            %     i = n = m = 0 \land {} &
            %       \ltl{G}\left(\begin{array}{@{}l@{}}
            %         n>0 \to (\wnextvar i = i \land \wnextvar n = n-1 \land
            %         \wnextvar m = m+1)\land {}\\
            %         n=0 \to (\wnextvar i = i+1 \land \wnextvar n = i +1 \land
            %         \wnextvar m = 0)
            %       \end{array}\right) \land \\
            %       {}\land{} & 
            %       \ltl{G}\left(\begin{array}{@{}l@{}}
            %         f(0,n) = n+1 \land f(m + 1, 0) = f(m,1) \land {}\\
            %         f(m+1,n+1) = f(m, f(m+1,n))
            %       \end{array}\right) \land
            %       \ltl{X^N(wX\bot)}
            %     \end{aligned}$ & SAT &
            %   {\tiny\sffamily 0.5s}
            %     \begin{sparkline}{8}
            %       \sparkspike 0   0.1 
            %       \sparkspike 0.1 0.2 
            %       \sparkspike 0.2 0.3 
            %       \sparkspike 0.3 0.4
            %       \sparkspike 0.4 0.5
            %       \sparkspike 0.5 0.6
            %       \sparkspike 0.6 0.7
            %       \sparkspike 0.7 0.8
            %       \sparkspike 0.8 0.9
            %       \sparkspike 0.9 1.0
            %       \sparkspike 1.0 1.1
            %     \end{sparkline}
            %     {\tiny\sffamily\raisebox{2ex}{5s}}\\
  \end{tabular}
  \caption{%
    Test formulas used in the experimental evaluation. $N$ is the scalable 
    parameter.
  }
  \label{tab:benchmarks}
\end{table*}

The SAT encoding of the original tableau for \LTL by Reynolds has been
implemented in the \BLACK satisfiability checking tool as described by Geatti
\etal~[\citeyear{DBLP:conf/tableaux/GeattiGM19,GeattiGMV21}]. Here, we extend it
naturally, namely with an SMT encoding of the tableau system described above.

We encode the tableau for a formula $\phi$ as a pair of SMT formulas that
represent the branches of the tree up to a depth $k$, for increasing values of
$k$, as shown in \cref{algo:black}.
The formula $\unr{\phi}_k$ used at Line 4 is called the \emph{k-unraveling} of
$\phi$, and represents the branches of the tree with at most $k+1$ poised nodes.
To define it, we encode the expansion rules of
\cref{tab:expansion-rules}.\fitpar

\begin{algorithm}[t]
  \caption{BLACK's main procedure for \LTLfMT}
  \label{algo:black}
  \begin{algorithmic}[1]
    \Procedure{\texttt{BLACK}}{$\phi$}
      \State $k \gets 0$
        \While{$\mathit{True}$}
          \If{$\unr{\phi}_k$ is \UNSAT}
            \State \textbf{return} $\phi$ is \UNSAT
          \EndIf
          
          \If{$\encod{\phi}_k$ is \SAT}
            \State \textbf{return} $\phi$ is \SAT
          \EndIf
          
          \State $k \gets k+1$
        \EndWhile
    \EndProcedure
  \end{algorithmic}
\end{algorithm}

\begin{definition}[Stepped Normal Form]
  Given an \LTLfMT formula $\phi$ and an $i\ge0$, its $i$-th \emph{stepped
  normal form}, denoted by $\snf_i(\phi)$, is defined as follows:
  \begin{align}
    \snf_i(\lambda) &= L_i(\lambda)
      \tag*{where $\lambda$ is a first-order formula} \\
    \snf_i(\otimes\,\phi_1) &= \otimes\,\phi_1 
      \tag*{where $\otimes \in \set{\ltl{X}, \ltl{wX}}$}      \\
    \snf_i(\phi_1 \otimes \phi_2) &= \snf_i(\phi_1) \otimes \snf_i(\phi_2) 
      \tag*{where $\otimes \in \set{\land, \lor}$}       \\
    \snf_i(\ltl{\phi_1 U \phi_2}) &= \snf_i(\phi_2) \lor (\snf_i(\phi_1) \land \ltl{X(\phi_1 U \phi_2)})  \\
    \snf_i(\ltl{\phi_1 R \phi_2}) &= \snf_i(\phi_2) \land (\snf_i(\phi_1) \lor \ltl{wX(\phi_1 R \phi_2)})
  \end{align}
\end{definition}
For a generic formula $\psi$ and for $i>0$, considering its stepped version
$\psi^i$, we denote as $\psi^i_G$ the result of replacing each \emph{tomorrow}
or \emph{weak tomorrow} formula $\tau$ with a fresh 0-ary EUF predicate
$\tau^i_G$.

The $k$-unraveling of $\phi$ is recursively defined as follows:
\begin{align}
  \unr{\phi}_0 = {}&\snf_0(\phi)_G^0 \\
  \unr{\phi}_{k+1} = {}&\unr{\phi}_k \land \ell^k \land 
    \smashoperator{\bigwedge_{\ltl{X \alpha} \in \XR}} 
    \Big( (\ltl{X \alpha})_G^k \leftrightarrow \snf_{k+1}(\alpha)_G^{k+1}\Big)\\
    &\phantom{\unr{\phi}^k\land \ell^k} \land 
    \smashoperator{\bigwedge_{\ltl{wX \alpha} \in \wXR}}
    \Big( (\ltl{wX \alpha})_G^k \leftrightarrow \snf_{k+1}(\alpha)_G^{k+1}\Big)
\end{align}
where $\XR$ and $\wXR$ are the sets of all the \emph{tomorrow} and \emph{weak
tomorrow} formulas, respectively, in $\closure(\phi)$. %ale: scriverei a parole questa spiegazione che Nicola mi ha appena dato sull'intuizione del passo induttivo dell'unravelling

If $\unr{\phi}_k$ is unsatisfiable, it means all the branches of the tableau for
$\phi$ are of at most $k+1$ poised nodes and are all rejected by the
\trule{Contradiction} rule.

The $\encod{\phi}_k$ formula encodes the \trule{Empty} rule:
\begin{equation}
  \encod{\phi}_k \equiv \unr{\phi}_k \land 
    \bigwedge_{\psi \in \XR} \neg \psi_G^k \land \neg\ell^k
\end{equation}
If $\encod{\phi}_k$ is satisfiable, the tableau for $\phi$ has at least a branch
accepted by the \trule{Empty} rule. 

By relating the tableau branches with the models for $\unr{\phi}_k$ and
$\encod{\phi}_k$, one can easily prove the following, taking inspiration from \cite{GeattiGMV21}.
\begin{restatable}{theorem}{encodingthm}
  \Cref{algo:black} answers \SAT if and only if the tableau for $\phi$ has an
  accepted branch.
\end{restatable}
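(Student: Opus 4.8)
The plan is to prove the statement by exhibiting a faithful correspondence between the satisfying assignments of the two families of SMT formulas and the branches of the tableau, and then to read the behaviour of the loop in \Cref{algo:black} off that correspondence. The central object is a lemma stating that, for every $k$, the models of $\unr{\phi}_k$ correspond to the branches of the tableau grown to their $(k+1)$-th poised node that have not been rejected by the \trule{Contradiction} rule along the way. The correspondence is driven by the fresh $0$-ary surrogate predicates $\psi_G^i$ replacing the \emph{tomorrow} and \emph{weak tomorrow} formulas: an assignment to these surrogates encodes exactly the nondeterministic choices made by the disjunctive expansion rules (\trule{Disjunction}, \trule{Until}, \trule{Release}) when a branch is grown, while the biconditionals $(\ltl{X \alpha})_G^k \leftrightarrow \snf_{k+1}(\alpha)_G^{k+1}$ and their weak analogues implement the \trule{Step} rule, transporting the obligations of poised node $\pi_k$ into $\pi_{k+1}$. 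First I would set this up by induction on $k$, following the recursive definition of the unraveling.

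The core technical step is to show that, once the surrogates are fixed so as to single out a branch $\branch$, the remaining first-order content of $\unr{\phi}_k$ coincides, modulo $\mathcal{T}\cup\text{EUF}$, with $\Omega(\branch)$. This requires matching the stepped-variable machinery exactly: the operation $\snf_i$, the rewriting $L_i$ (sending \emph{strong} atoms to $\ell^i\land\alpha$ and \emph{weak} atoms to $\ell^i\to\alpha$), and the stepping $t\mapsto t^i$ are precisely the ingredients from which $\Omega(\branch)$ is built, while the conjuncts $\ell^0,\ldots,\ell^{k-1}$ forced true inside $\unr{\phi}_k$ reproduce the factor $\bigwedge_{i=0}^{m-2}\ell^i$ of $\Omega$. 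I would establish this alignment by a structural induction on first-order formulas nested inside the induction on $k$. Granting it, $\unr{\phi}_k$ is satisfiable iff some branch reaching $k+1$ poised nodes has a satisfiable $\Omega(\branch)$, \ie is not rejected by \trule{Contradiction}.

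For the acceptance test, recall that $\encod{\phi}_k \equiv \unr{\phi}_k \land \bigwedge_{\psi\in\XR}\neg\psi_G^k \land \neg\ell^k$. I would argue that the two extra blocks encode precisely the premises of the \trule{Empty} rule at the last poised node: the literals $\neg\psi_G^k$, ranging only over strong \emph{tomorrow} surrogates, force $\Gamma(\pi_k)$ to contain no \emph{tomorrow} formula, and $\neg\ell^k$ contributes exactly the conjunct turning satisfiability of $\unr{\phi}_k$ into satisfiability of $\Omega(\branch)\land\neg\ell^{m-1}$ with $m=k+1$. Hence $\encod{\phi}_k$ is satisfiable iff the tableau has a branch with $k+1$ poised nodes accepted by \trule{Empty}.

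Finally I would combine these equivalences with the control flow. Since $\unr{\phi}_{k+1}$ contains $\unr{\phi}_k$ as a conjunct, satisfiability of the unravelings is monotone: \UNSAT\ propagates to all later levels and \SAT\ to all earlier ones. As $\encod{\phi}_k$ entails $\unr{\phi}_k$, at any level where $\encod{\phi}_k$ is \SAT\ the algorithm cannot have aborted with \UNSAT\ earlier, so it returns \SAT\ there; conversely the loop reports \SAT\ only on finding a satisfiable $\encod{\phi}_k$. Therefore \Cref{algo:black} answers \SAT\ iff $\encod{\phi}_k$ is satisfiable for some $k$, which by the previous step holds iff some branch is accepted by \trule{Empty} --- that is, since \trule{Empty} is the only accepting rule, iff the tableau has an accepted branch. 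The main obstacle is the inductive equivalence of the residual unraveling with $\Omega(\branch)$: getting the surrogate predicates, the stepped variables $x^i$, and the $\ell^i$ flags to line up precisely with the semantics of $\nextvar x$ and $\wnextvar x$ encoded by $L_i$ is where essentially all the work resides.
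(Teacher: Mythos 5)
Your proposal follows essentially the same route as the paper's proof: a correspondence between models of $\unr{\phi}_k$ and non-rejected branch prefixes with $k+1$ poised nodes (driven by the surrogate predicates and the biconditionals implementing the \trule{Step} rule, with the first-order residue matching $\Omega(\branch)$ via $L_i$, stepped terms and the $\ell^i$ flags), a separate argument that $\encod{\phi}_k$ captures the \trule{Empty} rule, and monotonicity of the unravelings to handle the control flow --- the paper merely packages the branch/model correspondence through an explicit intermediate notion of ``prefix pre-model''. The only nuance is that your per-$k$ biconditionals should be stated slightly more weakly (a model of $\unr{\phi}_k$ yields an accepted branch \emph{or} a branch longer than $k+1$ poised nodes, since the guided descent may hit an accepted leaf before the $(k+1)$-th poised node), but this weaker form still suffices for the existential statement you ultimately chain through \Cref{algo:black}.
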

%!TEX root = ../ijcai22.tex

\section{Experiments}
\label{sec:experiments}

We implemented the above encoding in the \BLACK satisfiability checking
tool.\footnotemark\ The current implementation supports the full \LTLfMT syntax
limited to mono-sorted theories, but in addition to that, it supports past
operators, which are handled in the tableau similarly to \cite{GeattiGMR21} and
are encoded similarly to~\cite{GeattiGMV21}.

\footnotetext{%
  The source code and all the tests and benchmarks are available in \BLACK's
  GitHub repository, merged into version \textsf{0.7} of the tool.%
}

We tested \BLACK on a number of crafted benchmarks intended to stress the
underlying algorithm in different ways. We have a total of 990 formulas
generated from 5 scalable patterns each parameterized on a value $N$. The
scalable schemata are shown in \cref{tab:benchmarks}. The table shows the theory
over which the formulas are intended to be tested, the expected result of the
satisfiability checking, and a qualitative plot of running times from $N=1$ up
to the maximum value of $N$ which runs under the 5 minutes timeout. The tests
were run on commodity hardware, setting \BLACK to use the Z3
backend~\cite{MouraB08}.

The first LIA schema represents a set of simple counters going from 0 to $N$.
The tableau for these formulas has an accepted branch of exponential size, but
\BLACK's encoding is not paying a price to that. In contrast, the explicit
Boolean encoding of a counter found in \BLACK's propositional \LTL benchmark
set~\cite{DBLP:conf/tableaux/GeattiGM19} takes orders of magnitude longer to
solve in the harder cases. This unsurprising fact shows the advantage of the
explicit arithmetic reasoning done by the SMT solver over bit blasting
techniques. The second LIA schema is an example of \emph{unsatisfiable} formulas
where \BLACK nevertheless halts, and stresses the behavior of the solver when
nontrivial arithmetic is involved. 

The LRA schemata represent behaviors that are only possible in a dense domain.
Both build an arbitrary big integer value used to define an arbitrary small
fraction to use as the target of the computation. The combined EUF+LIA schema
recursively defines a function (inspired by the computational complexity proofs
of common sorting algorithms) and forces the model to compute it for the given
number of steps.

We can see that these formulas stress the solver but are solved in a manageable
amount of time. Globally, this gives evidence of the applicability of our
approach to data-aware reasoning.

%!TEX root = ../ijcai22.tex

\section{Conclusions}
\label{sec:conclusions}

This paper presented \LTLfMT, an extension of \LTLf where propositional symbols
are replaced by first-order formulas over arbitrary theories, \ala SMT. We
discussed how the logic can be useful to approach the model-checking problem for
data-aware systems and a form of data-aware planning problems. Although \LTLfMT
is easily seen to be undecidable, we provided a semi-decision procedure in the
form of a sound and complete tableau system \ala Reynolds, that we encode into
an incremental SMT-based algorithm. We then implemented our approach in \BLACK,
a state-of-the-art \LTL/\LTLf satisfiability checker, proving the applicability
of the approach with a number of crafted benchmark tests. 

Investigating and developing the use cases discussed in \cref{sec:use-cases}
seems the most natural evolution of this work.

\clearpage
\section*{Acknowledgements} This work was supported by the projects VERBA, MENS, TOTA and STAGE by the Free University of Bozen-Bolzano. 
\bibliographystyle{named}
\bibliography{ijcai22}

\ifappendix
  \appendix
  \clearpage
  %!TEX root = ../ijcai22.tex

\section{Proofs}

\subsection*{Expressiveness of LTLf\textsuperscript{MT}}

\booleanabstraction*

\begin{proof}
  The right-to-left direction is straightforward, since \LTLf is subsumed by 
  \LTLfMT.

  For the left-to-right direction, consider an \LTLfMT formula $\phi$ without
  $\nextvar x$ nor $\wnextvar x$ terms. Given the signature $\Sigma
  = \mathcal{S} \cup \mathcal{P} \cup \mathcal{C} \cup \mathcal{F} \cup
  \mathcal{V}\cup\mathcal{W}$, we call $\mathcal{AP}_{\Sigma}$ the set of
  atomic propositions such that $P \in \mathcal{AP}_{\Sigma}$ iff $P \in
  \mathcal{P}$ is a \emph{unary} predicate symbol (\emph{w.l.o.g.} we can
  assume that $\mathcal{AP}_{\Sigma} \neq \emptyset$).  Given a $\theory$-state
  $s = \pair{M,\mu}$, we define $\mathbb{B}(s)$ (the \emph{Boolean abstraction}
  of $s$) as \emph{the} subset of $\mathcal{AP}_{\Sigma}$ such that:
  \begin{align}
    P \in \mathbb{B}(s) \Iff s \models \exists x \suchdot P(x)
  \end{align}
  for all $P \in \mathcal{AP}_{\Sigma}$. We extend the notion of \emph{Boolean
  abstraction} to words and languages as follows.  Given a finite word $\sigma
  = \seq{s_1,\dots,s_n}$ over the theory $\theory$, we define
  $\mathbb{B}(\sigma)$ (the \emph{Boolean abstraction} of $\sigma$) as the word
  $\seq{\mathbb{B}(s_1),\dots,\mathbb{B}(s_n)}$ over
  $2^{\mathcal{AP}_{\Sigma}}$. Similarly, given a language $\lang$
  modulo the theory $\theory$ (recall \cref{sec:ltlfmt}), we define
  $\mathbb{B}(\lang) = \set{\mathbb{B}(\sigma) \in
  (2^{\mathcal{AP}_{\Sigma}})^* \suchthat \sigma \in \lang}$.

  We will build an \LTLf formula $\phi_{\mathbb{B}}$ over the set of atomic
  propositions $\mathcal{AP}_{\Sigma}$ such that $\lang(\phi_{\mathbb{B}})
  = \mathbb{B}(\lang(\phi))$. Consider any first-order subformula $\psi$ of
  $\phi$. We define $\mathbb{B}(\psi)$ (the Boolean abstraction of $\psi$) as
  the set of the Boolean abstractions of all the $\theory$-states that are
  models of $\psi$ (\ie $\mathbb{B}(\psi) = \set{\mathbb{B}(s)}_{s \models
  \psi}$)\footnotemark.
  \footnotetext{In the general case, there exist infinitely many
  $\theory$-states $s$ such that $s \models \psi$, since $M$ may be a domain of
  infinite cardinality. This makes the proof not constructive.}
  Since any Boolean abstraction is a subset of $\mathcal{AP}_{\Sigma}$ and
  since $\mathcal{AP}_{\Sigma}$ is finite, the cardinality of
  $\mathbb{B}(\psi)$ is finite. This allows us to define the following Boolean
  formula over the variables $\mathcal{AP}_{\Sigma}$:
  \begin{align}
    \psi_{\mathbb{B}} \coloneqq \bigvee_{S \in \mathbb{B}(\psi)}
    \bigwedge_{\psi^\prime \in S} \psi^\prime
  \end{align}
  By definition of $\psi_{\mathbb{B}}$, for any $\theory$-state $s$, it holds
  that $\mathbb{B}(s) \models \psi_{\mathbb{B}}$ iff $s \models \psi$.  Let
  $\phi_{\mathbb{B}}$ be the \LTLf formula obtained by replacing all the
  maximal first-order subformulas $\psi$ of $\phi$ with $\psi_\mathbb{B}$. By
  induction, it can be easily proved that $\lang(\phi_{\mathbb{B}})
  = \mathbb{B}(\lang(\phi))$. This concludes the proof.
\end{proof}

\subsection*{Tableau for LTLf\textsuperscript{MT}}

Soundness and completeness are shown in the following two lemmas. Let us define 
some notation. For a node $u$ in the tableau for an \LTLfMT formula, we define \begin{equation}
  \Gamma^*(u)=\set*{
    \psi\in\Gamma(u') \middle| \begin{aligned}
      &\text{$u'$ is $u$ or an ancestor of $u$ descendant}\\[-1ex]
      &\text{of the poised node closest to $u$}\\[-1ex]
      &\text{or any ancestor of $u$ if there are no}\\[-1ex]
      &\text{poised nodes ancestors of $u$}
    \end{aligned}
  }
\end{equation}
For  signatures $\Sigma = \mathcal{S} \cup \mathcal{P} \cup
\mathcal{C} \cup \mathcal{F} \cup \mathcal{V}\cup\mathcal{W}$ and $\Sigma' = \mathcal{P}'
\cup \mathcal{C} \cup \mathcal{F} \cup \mathcal{V}'\cup\mathcal W$ as above, we call $\theory'$ the class of
$\Sigma'$-structures whose $\Sigma$-reduct is in $\theory$, and \emph{$\theory'$-state} a pair $s = \pair{M,\mu}$ made of a $\Sigma'$-structure
$M \in \theory'$ and a variable evaluation function $\mu : \mathcal{V}' \to
\dom(M)$.
For a purely
first-order term $t$, a $\theory'$-state $s$ and an environment $\xi$, we use the
shortcut notation $\eval{t}_{s,\xi}=\eval{t}^0_{\sigma,\xi}$ where $\sigma$ is
any word whose first state is $s$.

\begin{lemma}[Soundness]
  \label{thm:soundness}
  Given an \LTLfMT formula $\phi$, if the complete (i.e., the completely constructed) tableau  for $\phi$ contains 
  an accepted branch, then $\phi$ is satisfiable.
\end{lemma}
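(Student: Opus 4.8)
The plan is to extract a satisfying word directly from the SMT witness supplied by the \trule{Empty} rule. Suppose the accepted branch is $\branch$ with poised nodes $\bar\pi=\seq{\pi_0,\ldots,\pi_{m-1}}$. By definition of \trule{Empty}, $\Gamma(\pi_{m-1})$ contains no \emph{tomorrow} formulas and $\Omega(\branch)\land\neg\ell^{m-1}$ is satisfiable modulo $\theory\cup\text{EUF}$. First I would fix a $\theory'$-state $\pair{M',\mu'}$ witnessing this satisfiability and let $M$ be its $\Sigma$-reduct, which lies in $\theory$. I would then define the candidate word $\sigma=\seq{(M,\mu_0),\ldots,(M,\mu_{m-1})}$ of length $m$ by setting $\mu_i(x)=\mu'(x^i)$ for every $x\in\mathcal V$; this is well defined because the stepped variables $x^0,\ldots,x^{m-1}$ all belong to $\mathcal V'$. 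Observe that the witness forces $\ell^i$ true for $i\le m-2$ and $\ell^{m-1}$ false, so $\ell^i$ tracks exactly whether the state after $i$ exists in $\sigma$.

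The technical core is a correspondence lemma for first-order formulas: for every $\psi\in F(\pi_i)$, $\pair{M',\mu'}\models (L_i(\psi))^i$ if and only if $\sigma,i\models\psi$. I would prove this by structural induction on $\psi$, reducing to the atomic case where the value of $\ell^i$ does the work. When $i<m-1$ we have $\ell^i$ true, so $L_i$ collapses both $\ell^i\land\alpha$ and $\ell^i\implies\alpha$ to $\alpha$, and stepping makes $\mu'(t^i)$ coincide with $\eval{t}^i_{\sigma,\xi}$ — in particular $(\nextvar x)^i=(\wnextvar x)^i=x^{i+1}$ evaluates to $\mu_{i+1}(x)$, matching the semantics. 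When $i=m-1$, $\ell^{m-1}$ is false, so strong atoms become unsatisfiable and weak atoms trivially true, which is exactly what Items~1a and 1b prescribe at the end of a trace, where $\nextvar x$ and $\wnextvar x$ are not well defined. Since every conjunct $(L_i(\psi))^i$ of $\Omega(\branch)$ holds in the witness, I obtain $\sigma,i\models\psi$ for all $\psi\in F(\pi_i)$.

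With the base case in hand, I would prove the main invariant $C(i)$: for every $\psi\in\Gamma^*(\pi_i)$ we have $\sigma,i\models\psi$. The right order is a \emph{downward} induction on $i$ (from $m-1$ to $0$), with an inner structural induction on $\psi$ in which \emph{tomorrow} and \emph{weak tomorrow} formulas are treated as leaves. The expansion rules of \cref{tab:expansion-rules} give the Boolean and temporal-unfolding cases directly: \eg if $\alpha U\beta\in\Gamma^*(\pi_i)$ the branch took either $\set{\beta}$ or $\set{\alpha,\ltl{X(\alpha U\beta)}}$, and in each case the inductive hypotheses assemble $\sigma,i\models\alpha U\beta$; the \trule{Release} case is symmetric, with $\wnextvar x$ in place of $\nextvar x$. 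For a leaf $\ltl{X\psi'}\in\Gamma(\pi_i)$, the \trule{Step} rule applied at $\pi_i$ places $\psi'$ in the newly created child, whence $\psi'\in\Gamma^*(\pi_{i+1})$ and $C(i+1)$ yields $\sigma,i+1\models\psi'$; the \trule{Empty} precondition that $\Gamma(\pi_{m-1})$ has no \emph{tomorrow} formulas guarantees such a leaf occurs only when $i<m-1$, so the required next state exists and $\sigma,i\models\ltl{X\psi'}$. The $\ltl{wX\psi'}$ case is identical but also holds vacuously at $i=m-1$. Finally, since $\phi\in\Gamma(u_0)\subseteq\Gamma^*(\pi_0)$, instantiating $C(0)$ gives $\sigma,0\models\phi$, so $\phi$ is satisfiable.

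I expect the main obstacle to be getting the correspondence lemma exactly right at the trace boundary — matching the $\ell^i$-guarded encoding $L_i$ against the asymmetric well-definedness conditions (Items~1a and 1b) for strong and weak atoms — together with choosing the downward-on-time, inner-structural induction so that the temporal step for \emph{tomorrow} formulas may legitimately appeal to $C(i+1)$, while the \trule{Empty} condition rules out unsatisfiable strong-next demands at the last state.
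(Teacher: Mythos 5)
Your proposal is correct and follows essentially the same route as the paper's proof: extract the word $\sigma$ from the witness of $\Omega(\branch)\land\neg\ell^{m-1}$ via the stepped variables, establish the term/atom correspondence $\eval{t}^i_{\sigma,\xi}=\eval{t^i}_{s,\xi}$ and the $L_i$-guarded case analysis on strong/weak atoms, and then lift to all of $\Gamma^*(\pi_i)$ to conclude from $\phi\in\Gamma^*(\pi_0)$. Your explicit downward-on-$i$, inner-structural induction with \emph{tomorrow}/\emph{weak tomorrow} formulas as leaves is precisely the organization the paper leaves implicit when it calls those inductive cases trivial.
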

\begin{proof}
  We show how to extract a model for $\phi$ from an accepted branch $\branch$.
  Let $\bar\pi=\seq{\pi_0,\ldots,\pi_{m-1}}$ be the sequence of poised nodes of
  $\branch$. Note that since the branch is accepted, the \trule{Empty} rule
  fired, hence $\Omega(\branch)\land\neg\ell^{m-1}$ is satisfiable in $\mathcal
  T\cup\text{EUF}$. This means we have a $\Sigma'$-structure $M$ whose $\Sigma$-reduct is in $\mathcal{T}$
  and a variable evaluation function $\mu:\mathcal{V}'\to\dom(M)$ such that
  $s\models\Omega(\branch)\land\neg\ell^{m-1}$ where $s=(M,\mu)$. From $\mu$ we
  extract a model $\sigma$ for $\phi$ as follows. We define
  $\sigma=\seq{(M,\mu_0),\ldots, (M,\mu_{m-1})}$ such that $\mu_i(x)=\mu(x^i)$
  for all $0\le i < m$ and all $x\in\mathcal{V}$. Now we prove that $\sigma$ is
  indeed a model for $\phi$. To do that, we first show the relationship between
  terms in $\phi$ and terms in $\Omega(\phi)$, and in particular we prove that
  $\eval{t}_{\sigma,\xi}^i=\eval{t^i}_{s,\xi}$ for any environment $\xi$. This
  is done by induction on the structure of terms:
  \begin{enumerate}
    \item $\eval{c}_{\sigma,\xi}^i=c^M=\eval{c}_{s,\xi}$;
    \item for $x\in \mathcal V$, $\eval{x}_{\sigma,\xi}^i=\mu_i(x)$ and
    $\mu_i(x)=\mu(x^i)=\eval{x^i}_{s,\xi}$ by definition;
    \item for $x\in \mathcal W$, 
      $\eval{x}_{\sigma,\xi}^i=\xi(x)=\eval{x}_{s,\xi}=\eval{x^i}_{s,\xi}$ by 
      definition;
    \item
    $\eval{f(t_1,\ldots,t_k)}_{\sigma,\xi}^i=f^M(\eval{t_1}_{\sigma,\xi}^i,\ldots,\eval{t_k}_{\sigma,\xi}^i)$
    by definition and by the inductive hypothesis
    $\eval{t_j}_{\sigma,\xi}^i=\eval{t_j^i}_{s,\xi}$, hence:
    \begin{align}
      f^M(\eval{t_1}_{\sigma,\xi}^i,\ldots,\eval{t_k}_{\sigma,\xi}^i) &=
      f^M(\eval{t_1^i}_{s,\xi},\ldots,\eval{t_k^i}_{s,\xi})\\
      &=\eval{f(t_1^i,\ldots,t_k^i)}_{s,\xi}\\
      &= \eval{(f(t_1,\ldots,t_k))^i}_{s,\xi}
    \end{align}
    \item $\eval{\nextvar x}_{\sigma,\xi}^i=\eval{x}_{\sigma,\xi}^{i+1}$ by 
    definition, and
    by the inductive hypothesis we know that
    $\eval{x}_{\sigma,\xi}^{i+1}=\eval{x^{i+1}}_{s,\xi}=
      \eval{(\nextvar x)^i}_{s,\xi}$. The case for $\wnextvar x$ is similar.
  \end{enumerate}

  Now we go by induction by proving that if it holds that, for all $0\le i < m$:
  \begin{equation}
    s,\xi\models (L_i(\psi))^i \land \bigwedge_{j=0}^{m-2} \ell^j\land
      \neg\ell^{m-1}
  \end{equation}
  then $\sigma,\xi,i\models\psi$ for any first-order formula $\psi$ and any
  environment $\xi$. For the base case, we have $s,\xi\models
  (L_i(R(t_1,\ldots,t_k)))^i$. We distinguish the case of whether
  $R(t_1,\ldots,t_k)$ is a \emph{weak} atom, a \emph{strong} atom, or neither:
  \begin{enumerate}
    \item if it is a \emph{weak} atom, by definition of $L_i$ we have
      $s,\xi\models \ell^i\implies R(t_1^i,\ldots,t_k^i)$. If $M$ evaluates
      $\ell^i$ to false, then it must be $i=m-1$ because all the others $\ell^j$
      are set to true. Since the atom is weak, by semantics it trivially holds
      at the last state, hence $\sigma,\xi,i\models R(t_1,\ldots,t_k)$. Now
      suppose $M$ assigns $l^i$ to true. In this case, we know that $i<m-1$,
      because $\ell^{m-1}$ is set to false. Hence,
      $\eval{t_1^i}_{s,\xi},\ldots,\eval{t_k^i}_{s,\xi}$ are well defined. Since
      the implication holds and $\ell^i$ is true, by definition we have that
      $s,\xi\models R(t_1^i,\ldots,t_k^i)$, that is
      $(\eval{t_1^i}_{s,\xi},\ldots,\eval{t_k^i}_{s,\xi})\in R^M$. But
      $\eval{t_1^i}_{s,\xi}=\eval{t_1}_{\sigma,\xi}^i$ as we proved above, so
      $(\eval{t_1}_{\sigma,\xi}^i,\ldots,\eval{t_k}_{\sigma,\xi}^i)\in R^M$
      which by definition means $\sigma,\xi,i\models R(t_1,\ldots,t_k)$.
    \item if it is a \emph{strong} atom, then we have $s,\xi\models\ell^i \land
      R(t_1^i,\ldots,t_k^i)$. This means in particular that
      $s,\xi\models\ell^i$. Hence, we know that $i<m-1$, because $\ell^{m-1}$ is
      false, therefore $\eval{t_1^i},\ldots,\eval{t_k^i}$ are well-defined. Now,
      we have that $s,\xi\models R(t_1^i,\ldots,t_k^i)$, that is
      $(\eval{t_1^i}_{s,\xi},\ldots,\eval{t_k^i}_{s,\xi})\in R^M$. But
      $\eval{t_1^i}_{s,\xi}=\eval{t_1}_{\sigma,\xi}^i$ as we proved above, so
      $(\eval{t_1}_{\sigma,\xi}^i,\ldots,\eval{t_k}_{\sigma,\xi}^i)\in R^M$
      which by definition means $\sigma,\xi,i\models R(t_1,\ldots,t_k)$.
    \item if it is neither, $L_i$ is an identity, so we know $s,\xi\models
    R(t_1^i,\ldots,t_k^i)$, that is
    $(\eval{t_1^i}_{s,\xi},\ldots,\eval{t_k^i}_{s,\xi})\in R^M$. But
    $\eval{t_1^i}_{s,\xi}=\eval{t_1}_{\sigma,\xi}^i$ as we proved above, so
    $(\eval{t_1}_{\sigma,\xi}^i,\ldots,\eval{t_k}_{\sigma,\xi}^i)\in R^M$
    which by definition means $\sigma,\xi,i\models R(t_1,\ldots,t_k)$.
  \end{enumerate}
  For the inductive cases:
  \begin{enumerate}
    \item If $s,\xi\models (L_i(\exists x\lambda))^i\land \bigwedge_{j=0}^{m-2}
    \ell^j\land \neg\ell^{m-1}$, it means that there exists a value
    $v\in\dom(M)$ such that $s,\xi[x\leftarrow v]\models (L_i(\lambda))^i\land
    \bigwedge_{j=0}^{m-2} \ell^j\land \neg\ell^{m-1}$. Then, by induction
    hypothesis, we know that $\sigma,\xi[x\leftarrow v],i\models\lambda$, that
    by definition means that $\sigma,\xi,i\models\exists x\lambda$. The case for
    $\forall x\lambda$ is similar.
    \item the other inductive cases (negation, disjunction, conjunction) are 
      trivial.
  \end{enumerate}
  Now we can prove by induction that for any formula $\psi\in\Gamma^*(u_i)$ it
  holds that $\sigma,\xi,i\models\psi$. For the base case, for a first-order
  formula $\psi\in\Gamma^*(u_i)$ (hence in $\Gamma(u_i)$ as well), we know that
  $(L_i(\psi))^i$ is a conjunct of $\Omega(\branch)$ (\emph{w.l.o.g.} we can
  assume $\psi$ is not a conjunction or a disjunction because those are expanded
  by the tableau), hence $s,\xi\models (L_i(\psi))^i\land \bigwedge_{j=0}^{m-2}
  \ell^j\land \neg\ell^{m-1}$ (recall that the branch is accepted hence the
  \trule{Empty} rule fired) for any environment $\xi$. By what we proved above,
  this implies that $\sigma,\xi,i\models\psi$. The inductive cases are trivial.

  In particular this proves the thesis because $\phi\in\Gamma^*(u_0)$.
\end{proof}

Completeness is easier than the propositional case because we do not have the
\trule{Prune} rule.

\begin{lemma}[Completeness]
  Given an \LTLfMT formula $\phi$, if $\phi$ is satisfiable, then the complete
  tableau for $\phi$ contains an accepted branch.
\end{lemma}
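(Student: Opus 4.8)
The plan is to use a satisfying model of $\phi$ to \emph{guide} the construction of a single branch and then show that this branch is accepted by the \trule{Empty} rule. First I would fix a word $\sigma=\seq{(M,\mu_0),\ldots,(M,\mu_{n-1})}$ with $\sigma,0\models\phi$ and drive the expansion so that the choices made at each branching rule follow $\sigma$: at a \trule{Disjunction} $\alpha\lor\beta$ evaluated at state $i$ I pick the disjunct that holds at $i$; at an \trule{Until} node $\ltl{\alpha U \beta}$ I pick $\set{\beta}$ if $\sigma,i\models\beta$ and $\set{\alpha,\ltl{X(\alpha U \beta)}}$ otherwise; and symmetrically for \trule{Release}. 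Whenever a poised node is reached I apply the \trule{Step} rule. This singles out one branch $\branch$, whose poised nodes I write $\bar\pi=\seq{\pi_0,\ldots,\pi_{m-1}}$.

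The core invariant I would maintain is that $\pi_i$ represents the $i$-th state of $\sigma$, in the sense that $\sigma,i\models\psi$ for every $\psi\in\Gamma^*(\pi_i)$. The expansion rules preserve this by construction of the guided choices together with the semantics of $\land,\lor,\ltl{U},\ltl{R}$; the \trule{Step} rule preserves it because any $\ltl{X\psi}$ holding at $i$ forces the next state to exist with $\sigma,i+1\models\psi$, while any $\ltl{wX\psi}$ either sits at the last state or likewise yields $\sigma,i+1\models\psi$. Two finiteness facts then pin down the shape of $\branch$: within a single state only finitely many expansions fire, since each rule replaces a formula by strict subformulas or by a guarded tomorrow formula drawn from the finite set $\closure(\phi)$; and no strong tomorrow $\ltl{X\psi}$ can hold at the last state $n-1$ (it has no successor), so by the invariant every postponed \trule{Until} eventuality has been discharged earlier and $\Gamma(\pi_{n-1})$ contains no tomorrow formulas. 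Hence no further \trule{Step} is taken once the last state is reached, $\branch$ has exactly $m=n$ poised nodes, and $\pi_{m-1}$ already meets the non-tomorrow precondition of the \trule{Empty} rule.

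Finally I would exhibit a $\theory'$-state witnessing the second precondition of \trule{Empty}, mirroring the extraction done in the soundness proof. Concretely, I extend $M$ to a $\Sigma'$-structure interpreting each fresh predicate $\ell^i$ as $\top$ for $0\le i<n-1$ and $\ell^{n-1}$ as $\bot$, and define a valuation with $x^i\mapsto\mu_i(x)$; call the resulting $\theory'$-state $s$. Reusing verbatim the term correspondence $\eval{t}^i_{\sigma,\xi}=\eval{t^i}_{s,\xi}$ established in the proof of \cref{thm:soundness}, together with the semantics of $L_i$ that links strong and weak atoms to the truth of $\ell^i$, I would show that $s\models (L_i(\psi))^i$ for every first-order $\psi\in F(\pi_i)$. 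Since every $\ell^i$ with $i\le m-2$ is true and $\ell^{m-1}$ is false, this gives $s\models\Omega(\branch)\land\neg\ell^{m-1}$, so \trule{Empty} fires and $\branch$ is accepted; the same structure, with $\ell^{m-1}$ left unconstrained, witnesses satisfiability of $\Omega$ at every intermediate poised node, so the \trule{Contradiction} rule never fires along $\branch$.

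The step I expect to be the main obstacle is the termination argument of the second paragraph, namely proving that the guided construction halts with exactly $n$ poised nodes, i.e.\ that every postponed \trule{Until} eventuality is discharged before the end of the finite trace and that no strong tomorrow survives at $\pi_{m-1}$. This is precisely where the finiteness of $\sigma$ is indispensable and, as observed in \cref{sec:tableau}, where the absence of the \trule{Prune} rule makes finite-trace completeness simpler than its infinite-trace counterpart.
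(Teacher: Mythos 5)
Your proposal is correct and follows essentially the same route as the paper's proof: a model-guided descent of the tableau maintaining the invariant that every formula in a node's label holds at the corresponding position of $\sigma$, followed by the construction of a $\theory'$-state $s$ with $\mu(x^i)=\mu_i(x)$ and $\ell^i$ true exactly for $i<m-1$, the term correspondence $\eval{t}^i_{\sigma,\xi}=\eval{t^i}_{s,\xi}$, and the induction showing $s\models(L_i(\psi))^i$, so that \trule{Empty} fires. The only cosmetic difference is that the paper tracks the branch-to-model correspondence via an explicit function $J$ and dispatches the ``no tomorrow at the last poised node'' point in one line at the end, whereas you foreground it as the termination argument; the substance is identical.
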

\begin{proof}
  Let $\sigma$ be a model for $\phi$. We use $\sigma$ as a guide to descend the
  complete tableau for $\phi$, looking for an accepted branch. We start from the
  root $u_0$ and we find a branch $\branch=\seq{u_0,\ldots,u_{n-1}}$ by choosing
  at each step one of the children of the current node $u_i$ to become
  $u_{i+1}$. During the descent, we maintain a function $J:\N\to\N$ from
  positions in the current branch to positions in the model, with the invariant
  that $\sigma,J(i)\models\psi$ for all $\psi\in\Gamma(u_i)$. Initially, we
  define $J(0)=0$, and since $\Gamma(u_0)=\set{\phi}$, the invariant holds by
  definition. Then, at each step we choose a child of $u_i$ to become
  $u_{i+1}$ as follows:
  \begin{enumerate}
    \item If an expansion rule has been applied to $u_i$, then we have one or
    two children. If we have only one child, we pick that as $u_{i+1}$, and we
    define $J(i+1)=J(i)$. The invariant holds since by contruction of the
    expansion rules, we have that $\Gamma(u_i)\models\Gamma(u_{i+1})$, hence
    since we had that $\sigma,J(i)\models\Gamma(u_i)$, we have
    $\sigma,J(i+1)\models\Gamma(u_{i+1})$. If we have two children $u_i'$ and
    $u_i''$, we know by construction of the expansion rules that either
    $\sigma,J(i)\models\Gamma(u_i')$ or $\sigma,J(i)\models\Gamma(u_i'')$, hence
    we choose $u_{i+1}$ accordingly and we set $J(i+1)=J(i)$.
    \item If the \trule{Step} rule was applied to $u_i$, we have only one child
    that we pick as $u_{i+1}$, and we set $J(i+1)=J(i)+1$. Since
    $\sigma,J(i)\models\Gamma(u_i)$, it holds that
    $\sigma,J(i)+1\models\Gamma(u_{i+1})$ by the semantics of the
    \emph{tomorrow} and \emph{weak tomorrow} operators and by the definition of
    the \trule{Step} rule. Hence, $\sigma,J(i+1)\models\Gamma(u_{i+1})$, \ie the
    invariant holds.
  \end{enumerate}

  The descent proceeds until reaching a leaf or the end of the model $\sigma$.
  Now, let $\branch=\seq{u_0,\ldots,u_{n-1}}$ be the branch found by this
  descent and let $\bar\pi=\seq{\pi_0,\ldots,\pi_{m-1}}$ be the sequence of
  poised nodes of $\branch$. Note that $m=\abs{\sigma}$. Now, we can show that
  $\branch$ is indeed an accepted branch. To do that, we extract from $\sigma$ a
  model for $\Omega(\branch)\land\neg\ell^{m-1}$, thus showing that the
  \trule{Empty} rule has accepted the branch (and, consequently, that the
  \trule{Contradiction} rule has not fired). If
  $\sigma=\seq{(M,\mu_0),\ldots,(M,\mu_{m-1})}$, we define the $\mathcal
  T'$-state $s=(M',\mu)$ where $\mu(x^i)=\mu_i(x)$ and $M'$ agrees with $M$ on
  everything and moreover evaluates $\ell^j$ as true for $0\le j < m - 1$ and
  $\ell^{m-1}$ as false. Note that the invariant maintained during the
  descent translates to the fact that $\sigma,i\models\psi$ for all
  $\psi\in\Gamma(\pi_i)$. 
  
  Similarly to the proof of \cref{thm:soundness}, we can prove by induction that
  for all terms $t$, $\eval{t^i}_{s,\xi}=\eval{t}_{\sigma,\xi}^i$ for any
  environment $\xi$. Now, let $\psi$ be a first-order formula. We can prove by
  induction that, for all $0\le i < m$, if $\sigma,\xi,i\models\psi$ for some
  environment $\xi$, then $s,\xi\models (L_i(\psi))^i$.
  \begin{enumerate}
    \item If $\sigma,\xi,i\models R(t_1,\ldots,t_k)$, then
      $(\eval{t_1}_{\sigma,\xi}^i,\ldots,\eval{t_k}_{\sigma,\xi}^i)\in R^M$. By
      what we shown above this means that
      $(\eval{t_1^i}_{s,\xi},\ldots,\eval{t_k^i}_{s,\xi})\in R^M$, hence $(\eval{t_1^i}_{s,\xi},\ldots,\eval{t_k^i}_{s,\xi})\in R^{M'}$ as well, that by
      definition means that $s,\xi\models R(t_1^i,\ldots,t_k^i)$. This also
      implies that $s,\xi\models \ell^i \implies R(t_1^i,\ldots,t_k^i)$, hence,
      if $R(t_1^i,\ldots,t_k^i)$ is a \emph{weak} atom, we have $s,\xi\models
      (L_i(\psi))^i$. If instead $R(t_1^i,\ldots,t_k^i)$ is a \emph{strong} atom,
      we know that $\eval{t_1^i},\ldots,\eval{t_k^i}$ are well-defined, hence
      $i<m-1$, which means that $s,\xi\models \ell^i \land
      R(t_1^i,\ldots,t_k^i)$, since by construction $\ell^j$ holds for all $0\le
      j < m -1$, hence $s,\xi\models (L_i(\psi))^i$ in this case as well. If the
      atom is neither strong nor weak, we already have that $s,\xi\models
      (L_i(\psi))^i$.
    \item If $\sigma,\xi,i\models \exists x \psi$, it means that there exists a
      value $v\in\dom(M)$ such that $\sigma,\xi[x\leftarrow v],i\models\psi$.
      Then, by the induction hypothesis, we know that $s,\xi[x\leftarrow
      v]\models (L_i(\psi))^i$, which by definition means that $s,\xi\models
      L_i((\exists x\psi)^i)$. The case for $\forall x\psi$ is similar.
  \end{enumerate} 
  
  Since by the invariant of the descent we know that $\sigma,i\models\psi$ for
  all $\psi\in\tlabel(\pi_i)$, by what we showed above this means $s,\xi\models
  (L_i(\psi))^i$ for all $\psi\in F(\pi_i)$, then by construction we also
  have that $s,\xi\models\ell^j$ for all $0\le j < m-1$ and
  $s,\xi\models\neg\ell^{m-1}$, hence by recalling the definition of
  $\Omega(\branch)$, we showed that
  $s,\xi\models\Omega(\branch)\land\neg\ell^{m-1}$. As a last consideration,
  note that if we had any \emph{tomorrow} formula
  $\ltl{X\psi}\in\Gamma(\pi_{m-1})$, that would mean $\sigma,{m-1}\models
  \ltl{X\psi}$, but then $m-1$ could not be the last position of the model.
  Hence, the \trule{Empty} rule fired to accept the branch.
\end{proof}

Together, the above lemmas prove the following.
\soundcomp*

\subsection*{Encoding}

Here we prove the correctness of the SMT encoding of our tableau.

First, we define the set $\mathcal{P}''=\mathcal{P}\cup\set{\ell^i | i\in\N}\cup\set{(\ltl{X\alpha})^i_G |\text{$\alpha$  is an \LTLfMT formula}}$.

\begin{definition}[Atom]
  \label{def:pre-model-atom}
  Given a formula $\phi$, an \emph{atom} for $\phi$ is a set
  $\Delta$ of \emph{purely first-order} formulas %\textcolor{blue}{
  over $P''$ and
  $V'$ defined as follows:
  \begin{enumerate}
    \item either $p(t_1,\ldots,t_n)\in\Delta$ or 
      $\neg p(t_1,\ldots,t_n)\in\Delta$;
    \item $\psi_1\lor\psi_2\in\Delta$ iff $\psi_1\in\Delta$ or 
      $\psi_2\in\Delta$;
    \item $\psi_1\land\psi_2\in\Delta$ iff $\psi_1\in\Delta$ and 
      $\psi_2\in\Delta$;
    \item $\exists x \phi(x)\in\Delta$ iff there is a term $t$ such that  
      $\phi(t)\in\Delta$;
    \item $\forall x \phi(x)\in\Delta$ iff for all terms $t$ we have
      $\phi(t)\in\Delta$;
  \end{enumerate}
\end{definition}

Given a sequence $\bar\Delta=\seq{\Delta_0,\ldots,\Delta_m}$ of atoms, we define
$\Omega(\bar\Delta)$ as the following set of formulas:
\begin{equation}
  \Omega(\bar\Delta)=\bigcup_{i=0}^m \Delta_i \cup 
    \set{\ell^0,\ldots,\ell^{m-1}}
\end{equation}

\begin{definition}[Prefix pre-model]
  \label{def:prefix-pre-model}
  Given a \LTLfMT formula $\phi$, a \emph{prefix pre-model} for $\phi$ is a
  sequence $\bar\Delta=\seq{\Delta_0,\ldots,\Delta_m}$ of atoms such that:
  \begin{enumerate}
    \item $\snf_0(\phi)^0_G\in\Delta_0$
    \item for $i< m$, $(\ltl{X\alpha})^i_G\in\Delta_i$ iff
      $\snf_{i+1}(\alpha)^{i+1}_G\in\Delta_{i+1}$;
    \item for $i< m$, $(\ltl{wX\alpha})^i_G\in\Delta_i$ iff
    $\snf_{i+1}(\alpha)^{i+1}_G\in\Delta_{i+1}$;
    \item $\Omega(\bar\Delta)$ is satisfiable.
  \end{enumerate}
\end{definition}

\begin{lemma}
  \label{lemma:pre-model-existence}
  Let $\branch$ be a non-rejected prefix of any branch of the complete tableau
  for $\phi$, $\bar\pi=\seq{\pi_0,\ldots,\pi_m}$ be the sequence of its poised
  nodes, and let $s\models\Omega(\branch)$. There exists at least a prefix
  pre-model $\bar\Delta=\seq{\Delta_0,\ldots,\Delta_m}$ for $\phi$ such
  that for each formula $\psi\in\closure(\phi)$ and for each $i\le m$, we have
  $\snf_i(\psi)^i_G\in\Delta_i$ iff $s\models\snf_i(\psi)^i_G$.
\end{lemma}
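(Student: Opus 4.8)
The plan is to read the required prefix pre-model directly off the model $s$, after fixing a coherent interpretation of the fresh tomorrow-predicates. First I would extend $s$ to a structure $s^+$ that additionally interprets the $0$-ary predicates $(\ltl{X\alpha})^i_G$ and $(\ltl{wX\alpha})^i_G$; since these do not occur in $\Omega(\branch)$, any choice keeps $s^+\models\Omega(\branch)$ with $\Sigma$-reduct in $\theory$. I would fix the choice by downward recursion on $i$: at the top level $i=m$ set $(\ltl{X\alpha})^m_G$ (resp.\ $(\ltl{wX\alpha})^m_G$) true exactly when $\ltl{X\alpha}\in\Gamma(\pi_m)$ (resp.\ $\ltl{wX\alpha}\in\Gamma(\pi_m)$); and for $i<m$ set \emph{both} $(\ltl{X\alpha})^i_G$ and $(\ltl{wX\alpha})^i_G$ true iff $s^+\models\snf_{i+1}(\alpha)^{i+1}_G$. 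This is well-defined, since $\snf_{i+1}(\alpha)^{i+1}_G$ mentions only predicates of level $\ge i+1$ (the $\snf$ recursion stops at tomorrow/weak-tomorrow), which are already fixed. Then I would let each atom $\Delta_i$ be the set of all purely first-order formulas over $\mathcal P''$ and $\mathcal V'$ satisfied by $s^+$, so that the stated matching condition, $\snf_i(\psi)^i_G\in\Delta_i$ iff $s\models\snf_i(\psi)^i_G$, holds by construction (reading $s$ as $s^+$).

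Next I would check that each $\Delta_i$ is an \emph{atom} per \cref{def:pre-model-atom}. Clauses~1--3 are immediate, as $\Delta_i$ is the complete, consistent first-order theory of a single structure. The delicate point is clauses~4--5: a true existential must have a \emph{term} witness and a false universal a term counterexample. This is the subtlest step, and it is exactly the phenomenon flagged earlier for the Boolean abstraction (\eg over LRA not every element is named by a term). I would resolve it non-constructively by reading ``term'' over the elementary diagram of $s^+$, \ie the signature expanded with one constant per domain element, so every witness is named; this leaves the truth of all formulas over $\mathcal P''$ and $\mathcal V'$ unchanged, hence preserves both the matching condition and satisfaction of $\Omega(\branch)$.

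For the conditions of \cref{def:prefix-pre-model}, condition~4 (satisfiability of $\Omega(\bar\Delta)$) is witnessed by $s^+$ itself: every $\Delta_i$ consists of truths of $s^+$, and the literals $\ell^0,\dots,\ell^{m-1}$ hold in $s^+$ because $s\models\Omega(\branch)$. Conditions~2 and~3 hold by the very definition of the level-$i$ predicates for $i<m$, which I set equal to the truth of $\snf_{i+1}(\alpha)^{i+1}_G$. The real content is condition~1, $s^+\models\snf_0(\phi)^0_G$, which I would obtain from the following sub-claim, proved by downward induction on $i$ and, within each level, by structural induction on the expansion rules producing $\pi_i$: for every $\chi\in\closure(\phi)$ appearing in the expansion phase leading to $\pi_i$, $s^+\models\snf_i(\chi)^i_G$. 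First-order leaves hold because their $L_i$-stepped forms are conjuncts of $\Omega(\branch)$; the Boolean and $\ltl{U}$/$\ltl{R}$ cases follow from the defining equations of $\snf_i$ and the rules of \cref{tab:expansion-rules} (a disjunctive rule contributes the disjunct actually chosen on the branch, and $\ltl{X(\alpha U\beta)}$, $\ltl{wX(\alpha R\beta)}$ are available in $\closure(\phi)$).

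The crux is the elementary case $\ltl{X\psi}$ (and symmetrically $\ltl{wX\psi}$), where $\snf_i(\ltl{X\psi})^i_G=(\ltl{X\psi})^i_G$: for $i<m$ the \trule{Step} rule carries $\psi$ into the expansion of $\pi_{i+1}$, so the level-$(i+1)$ induction gives $s^+\models\snf_{i+1}(\psi)^{i+1}_G$, whence $(\ltl{X\psi})^i_G$ is true by our biconditional choice; for $i=m$ the predicate was set true directly because $\ltl{X\psi}\in\Gamma(\pi_m)$. Applying the sub-claim at $i=0$ with $\chi=\phi$ yields condition~1. I expect the main obstacle to be precisely this reconciliation: the tableau's \trule{Step} rule propagates tomorrow-requests only \emph{forward}, whereas \cref{def:prefix-pre-model} demands a \emph{biconditional} coherence between levels; the downward recursive choice of the fresh-predicate interpretation is what makes the two agree, while the term-witness requirement for atom-hood is the one genuinely non-constructive ingredient.
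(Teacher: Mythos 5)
Your proof is correct and follows essentially the same route as the paper's: both read the atoms $\Delta_i$ off the given model $s$ after freely fixing the interpretation of the fresh tomorrow-predicates (the paper asserts this \emph{w.l.o.g.}, you make it explicit via downward recursion on $i$), and both reduce condition~1 of \cref{def:prefix-pre-model} to an induction tracing the expansion rules along the branch, with the first-order leaves discharged as conjuncts of $\Omega(\branch)$. Your explicit handling of the term-witness requirement in clauses~4--5 of \cref{def:pre-model-atom} patches a point the paper silently glosses over, but it does not change the overall argument.
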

\begin{proof}
  We now define $\bar\Delta=\seq{\Delta_0,\ldots,\Delta_m}$ to be a prefix
  pre-model for $\phi$ such that $\ell^i\in\Delta_i$ for all $0\le i < m$ and
  for all atoms $p(t_1^i,\ldots,t_n^i)$ we have
  $p(t_1^i,\ldots,t_n^i)\in\Delta_i$ if and only if $s\models
  p(t_1^i,\ldots,t_n^i)$. We now have to show that $\bar\Delta$ is the prefix
  pre-model we are looking for. First, note that such a prefix pre-model exists
  (\ie $\Omega(\bar\Delta)$ is satisfiable). To see this, consider that $s$ is
  also a model of $\Omega(\bar\Delta)$, and \emph{w.l.o.g.} we can suppose it to
  also evaluate $s\models(\ltl{X\alpha})^i_G$ if and only if
  $(\ltl{X\alpha})^i_G\in\Delta_i$ ($(\ltl{X\alpha})^i_G$ propositions are
  unconstrained by $\Omega(\branch)$) and similarly for $(\ltl{wX\alpha})^i_G$.
  
  Now, we can show that for all first-order formulas $\lambda$, we have
  $L_i(\lambda)^i\in\Delta_i$ if and only if $s\models L_i(\lambda)^i$. We go 
  by induction on $\lambda$:
  \begin{enumerate}
    \item for the base case of an atom, if
    $L_i(p(t_1,\ldots,t_n))^i\in\Delta_i$, we distinguish the cases of whether
    the atom is \emph{strong}, \emph{weak}, or neither:
    \begin{enumerate}
      \item if the atom is strong, then we have $\ell^i\land
      p(t_1^i,\ldots,t_n^i)\in\Delta_i$, hence we have $\ell^i\in\Delta_i$ and
      $p(t_1^i,\ldots,t_n^i)\in\Delta_i$. By definition we then have that
      $s\models\ell^i$ and $s\models p(t_1^i,\ldots,t_n^i)$, hence
      $s\models\ell^i\land p(t_1^i,\ldots,t_n^i)$ which means $s\models
      L_i(p(t_1,\ldots,t_n))^i$. The \viceversa holds similarly.
      \item if the atom is weak, then we have $\ell^i\implies
      p(t_1,\ldots,t_n)\in\Delta_i$, which means $\neg\ell^i\in\Delta_i$ or
      $p(t_1^i,\ldots,t_n^i)\in\Delta_i$. By definition this means
      $s\not\models\ell^i$ or $s\models p(t_1^i,\ldots,t_n^i)$ hence,
      $s\models\ell^i\implies p(t_1^i,\ldots,t_n^i)$, which means $s\models
      L_i(p(t_1,\ldots,t_n))^i$. The \viceversa holds similarly.
      \item if the atom is neither strong nor weak, the thesis follows
      trivially.
    \end{enumerate}
    \item the inductive cases are trivial.
  \end{enumerate}

  Now we can show that for each formula $\psi\in\closure(\phi)$, we have
  $\snf_i(\psi)^i_G\in\Delta_i$ iff $s\models\snf_i(\psi)^i_G$. We do this by
  induction on $\psi$:
  \begin{enumerate}
    \item for the base case of a first-order formula $\lambda$, we have that
      $\snf_i(\lambda)^i_G\equiv L_i(\lambda)^i$, for we showed above,
      $L_i(\lambda)^i\in\Delta_i$ if and only if $s\models L_i(\lambda)^i$,
      which is equivalent to say that $s\models\snf_i(\lambda)^i_G$.
    \item if $\snf_i(\ltl{X\alpha})^i_G\in\Delta_i$, it means
      $(\ltl{X\alpha})^i_G\in\Delta_i$, which by definition of $s$ means
      $s\models(\ltl{X\alpha})^i_G$.
    \item if $\snf_i(\psi_1\lor\psi_2)^i_G\in\Delta_i$, it means
      $\snf_i(\psi_1)^i_G\lor\snf_i(\psi_2)^i_G\in\Delta_i$, which by definition
      of atom holds iff $\snf_i(\psi_1)^i_G\in\Delta_i$ or
      $\snf_i(\psi_2)^i_G\in\Delta_i$. By inductive hypothesis this holds iff
      $s\models\snf_i(\psi_1)^i_G$ or $s\models\snf_i(\psi_2)^i_G$, which is
      equivalent to say that $s\models\snf_i(\psi_1\lor\psi_2)^i_G$. The
      reasoning for the conjunction, the \emph{until} and the \emph{release}
      operators is similar.
  \end{enumerate}
  Finally, we show that this prefix pre-model is for $\phi$, that is,
  $\snf_0(\phi)^0_G\in\Delta_0$. We show by induction that for all $\psi\in\closure(\phi)$,
  if $\psi\in\tlabel^*(\pi_i)$ then $\snf_i(\psi)^i_G\in\Delta_i$:
  \begin{enumerate}
    \item for the base case, for a first-order formula
    $\lambda\in\tlabel(\pi_i)$, we know that $s\models L_i(\lambda)^i$, since it
    is a conjunct of $\Omega(\branch)$ (\emph{w.l.o.g.} we can assume $\psi$ is
    not a conjunction or a disjunction because those are expanded by the
    tableau). But this means $s\models\snf_i(\lambda)^i_G$, and by what we
    showed above, $\snf_i(\lambda)^i_G\in\Delta_i$.
    \item the inductive cases are trivial.
  \end{enumerate}
  This implies that $\snf_0(\phi)^0_G\in\Delta_0$ since
  $\phi\in\tlabel^*(\pi_0)$.
\end{proof}

\begin{lemma}
  \label{lemma:soundness-kunrav-branches-to-models}
  Let $\phi$ be an \LTLfMT formula. If the tableau for $\phi$ contains either an
  accepted branch of $k+1$ poised nodes, or a longer branch, then $\unr{\phi}_k$
  is satisfiable.
\end{lemma}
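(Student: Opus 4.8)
The plan is to assemble a model of $\unr{\phi}_k$ directly from a prefix pre-model extracted via \cref{lemma:pre-model-existence}, which already carries the heavy lifting. First I would truncate the witnessing branch to the prefix $\branch$ consisting of exactly the first $k+1$ poised nodes $\bar\pi=\seq{\pi_0,\ldots,\pi_k}$. In either case of the hypothesis this prefix is \emph{non-rejected}: if the branch is accepted with $k+1$ poised nodes, then \trule{Contradiction} never fired on it; and if the branch is strictly longer, the construction extended past $\pi_k$, which is possible only if \trule{Contradiction} did not fire at $\pi_k$ either, since the tableau stops extending a branch the moment that rule applies. In both cases $\Omega(\branch)$ is satisfiable modulo $\theory\cup\text{EUF}$, so I may fix a model $s\models\Omega(\branch)$.

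Next I would apply \cref{lemma:pre-model-existence} to $\branch$ and $s$, obtaining a prefix pre-model $\bar\Delta=\seq{\Delta_0,\ldots,\Delta_k}$ for $\phi$ such that, for every $\psi\in\closure(\phi)$ and every $i\le k$, we have $\snf_i(\psi)^i_G\in\Delta_i$ iff $s\models\snf_i(\psi)^i_G$. As established inside the proof of that lemma, $s$ may additionally be taken to interpret the fresh $0$-ary EUF predicates $(\ltl{X\alpha})^i_G$ and $(\ltl{wX\alpha})^i_G$ in accordance with their membership in $\Delta_i$ (they are left unconstrained by $\Omega(\branch)$), so that $s\models\Omega(\bar\Delta)$ and the membership/satisfaction correspondence extends uniformly to every stepped formula of the closure at any index $i\le k$.

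It then remains to check that this $s$ satisfies each conjunct of $\unr{\phi}_k$, which I obtain by unfolding the recursive definition. The base conjunct $\snf_0(\phi)^0_G$ is satisfied because condition~1 of \cref{def:prefix-pre-model} gives $\snf_0(\phi)^0_G\in\Delta_0$, hence $s\models\snf_0(\phi)^0_G$ by the correspondence. The literals $\ell^0,\ldots,\ell^{k-1}$ belong to $\Omega(\bar\Delta)$ and are therefore satisfied by $s$. Finally, each biconditional $(\ltl{X\alpha})^i_G\leftrightarrow\snf_{i+1}(\alpha)^{i+1}_G$ for $0\le i<k$ holds because condition~2 of \cref{def:prefix-pre-model} states $(\ltl{X\alpha})^i_G\in\Delta_i$ iff $\snf_{i+1}(\alpha)^{i+1}_G\in\Delta_{i+1}$, and the correspondence turns both sides of this set-membership equivalence into the matching satisfaction facts about $s$; the $\wnextvar$-conjuncts are dispatched identically via condition~3. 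Collecting these, $s\models\unr{\phi}_k$, so $\unr{\phi}_k$ is satisfiable.

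I expect the only genuinely delicate point to be the first step, namely arguing that a \emph{strictly longer} branch certifies non-rejection of its length-$(k+1)$ prefix and hence the satisfiability of $\Omega(\branch)$; everything afterwards is bookkeeping that matches the three conjuncts of $\unr{\phi}_k$ against the structural conditions of a prefix pre-model, with the real content absorbed into \cref{lemma:pre-model-existence}.
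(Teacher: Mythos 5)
Your proposal is correct and follows essentially the same route as the paper's proof: both obtain a model $s$ of $\Omega(\branch)$ from the non-rejection of the length-$(k+1)$ prefix, invoke \cref{lemma:pre-model-existence} (with the same w.l.o.g.\ choice of interpretation for the grounded tomorrow/weak-tomorrow predicates), and then match the conjuncts of $\unr{\phi}_k$ against conditions 1--3 of \cref{def:prefix-pre-model}. The only cosmetic difference is that you unfold the recursive definition of $\unr{\phi}_k$ directly where the paper phrases the same bookkeeping as an induction on $k$.
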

\begin{proof}
  Let $\branch$ be a branch that is either accepted and has $k+1$ poised nodes
  or that is longer than $k+1$ poised nodes. Let
  $\bar\pi=\seq{\pi_0,\ldots,\pi_k}$ be the sequence of its first $k+1$ poised
  nodes. Since the \trule{Contradiction} rule did not fire on $\pi_{k}$, it
  means that $\Omega(\branch)$ is satisfiable. Let $s=(M,\mu)$ such that
  $s\models\Omega(\branch)$ and let $\bar\Delta=\seq{\Delta_0,\ldots,\Delta_k}$
  be the pre-model for $\phi$ shown to exist by
  \cref{lemma:pre-model-existence}. Recall from the proof of
  \cref{lemma:pre-model-existence} that \emph{w.l.o.g.} we can suppose that
  $s\models(\ltl{X\alpha})^i_G$ if and only if $(\ltl{X\alpha})^i_G\in\Delta_i$.
  
  Now, we can show that $s\models\unr{\phi}_k$. We can do it by induction on
  $k$. If $k=0$, since $\snf_0(\phi)^0_G\in\Delta_0$, by definition of
  $\bar\Delta$ we have $s\models \snf_0(\phi)^0_G$ and so
  $s\models \snf_0(\phi)^0_G$, hence we
  get that $s\models\unr{\phi}_0$. For the inductive step, recall that:
  \begin{align}
    \unr{\phi}_k ={}&\unr{\phi}_{k-1} \land  \ell^{k-1} \land 
    \smashoperator{\bigwedge_{\ltl{X \alpha} \in \XR}} 
    \Big( (\ltl{X \alpha})_G^{k-1} \leftrightarrow \snf_k(\alpha)_G^k \Big)\\
    &\phantom{\unr{\phi}^{k-1}\land  \ell^{k-1} } \land
    \smashoperator{\bigwedge_{\ltl{wX \alpha} \in \wXR}}
    \Big( (\ltl{wX \alpha})_G^{k-1} \leftrightarrow \snf_k(\alpha)_G^k \Big)
  \end{align}
  By the inductive hypothesis, we know that $s\models\unr{\phi}_{k-1}$.
  Moreover, we know that $s\models\ell^{k-1}$, because
  $s\models\Omega(\branch)$. Hence we need to check the two big conjunctions of
  double implications. If $s\models(\ltl{X\alpha})^{k-1}_G$, by definition of
  $s$ we have $(\ltl{X\alpha})^{k-1}_G\in\Delta_{k-1}$, hence by definition of
  prefix pre-model, we have $\snf_k(\alpha)^k_G\in\Delta_k$. By definition of
  $\bar\Delta$, this implies that $s\models\snf_k(\alpha)^k_G$. Conversely, if
  $s\models\snf_k(\alpha)^k_G$, by definition of $\bar\Delta$ we have
  $\snf_k(\alpha)^k_G\in\Delta_k$, which by definition of prefix pre-model means
  that $(\ltl{X\alpha})^{k-1}_G\in\Delta_{k-1}$. By definition of $s$, this
  means $s\models(\ltl{X\alpha})^{k-1}_G$. The reasoning for the weak tomorrows
  is similar. Hence $s$ satisfies the two big conjunctions and then it satisfies
  the whole $\unr{\phi}_k$.
\end{proof}
\begin{lemma}
  \label{lemma:soundness-kunrav-models-to-branches}
  Let $\phi$ be an \LTLfMT formula. If $\unr{\phi}_k$ is satisfiable, then in
  the tableau for $\phi$ there is either an accepted branch, or a branch longer
  than $k+1$ poised nodes.
\end{lemma}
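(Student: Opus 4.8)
The plan is to reverse the descent used in the completeness proof: rather than letting a word model of $\phi$ resolve the tableau's nondeterministic choices, I would use a satisfying assignment $s$ of $\unr{\phi}_k$ both to resolve those choices and to certify that the resulting branch is not rejected. Fixing such an $s$, I would descend from the root $\Gamma(u_0)=\set{\phi}$, building a branch while maintaining the invariant that $s\models\snf_i(\psi)_G^i$ for every formula $\psi$ in the current node's label, where $i$ counts the \trule{Step} applications performed so far. This is, for the $k$-unraveling, the analogue of the correspondence established in \cref{lemma:pre-model-existence}, and it amounts to following the prefix pre-model of \cref{def:prefix-pre-model} induced by $s$.

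The invariant holds at the root because $\snf_0(\phi)_G^0$ is a conjunct of $\unr{\phi}_k$, and I would check that every expansion rule preserves it: at a conjunction both conjuncts of $\snf_i(\alpha\land\beta)_G^i$ are forced; at a disjunction, an \emph{until}, or a \emph{release} the definition of $\snf_i$ exposes a top-level disjunction, so at least one child has its guard satisfied by $s$ and I pick that child, exactly as the word model is followed in the completeness descent (note that $\snf_i$ leaves the deferred formulas $\ltl{X(\alpha U \beta)}$ and $\ltl{wX(\alpha R \beta)}$ elementary). The crucial case is the \trule{Step} rule: if $\ltl{X\psi}\in\Gamma(\pi_i)$ then the invariant gives $s\models(\ltl{X\psi})_G^i$, and the biconditional conjunct $(\ltl{X\psi})_G^i\leftrightarrow\snf_{i+1}(\psi)_G^{i+1}$ of $\unr{\phi}_k$ — present precisely because $i<k$ — yields $s\models\snf_{i+1}(\psi)_G^{i+1}$, re-establishing the invariant at level $i+1$; the weak-tomorrow case is symmetric. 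Since each maximal expansion phase terminates at a poised node (as in the propositional case), this produces a branch $\branch$ with exactly $k+1$ poised nodes $\pi_0,\dots,\pi_k$.

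At each poised node the invariant specialises, for the first-order formulas $\psi\in F(\pi_i)$, to $s\models(L_i(\psi))^i$, since $\snf_i(\psi)_G^i\equiv (L_i(\psi))^i$ for such $\psi$; together with $s\models\ell^i$ for all $i<k$ (again conjuncts of $\unr{\phi}_k$) this gives $s\models\Omega(\branch)$, and since the $\Omega$-formula of every shorter prefix is a sub-conjunction of $\Omega(\branch)$, these too are satisfied by $s$, so the \trule{Contradiction} rule fires at none of $\pi_0,\dots,\pi_k$; hence $\branch$ occurs as a non-rejected prefix of the complete tableau. To finish I would split on the situation at $\pi_k$: if $\Gamma(\pi_k)$ contains no \emph{tomorrow} formula and $\Omega(\branch)\land\neg\ell^k$ is satisfiable, the \trule{Empty} rule accepts $\branch$, giving an accepted branch; otherwise $\branch$ is neither accepted nor rejected at $\pi_k$, so the construction must continue with a further \trule{Step}, yielding a branch with more than $k+1$ poised nodes. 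I expect the main obstacle to be the \trule{Step} case of the invariant, since it is the only point where the temporal content of the branch must be matched against the biconditional conjuncts of the $k$-unraveling, and where one must check that the biconditional bridging levels $i$ and $i+1$ is available exactly when $i<k$, so that the reconstruction stops cleanly at $\pi_k$.
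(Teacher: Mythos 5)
Your proposal is correct and follows essentially the same route as the paper's proof: using a satisfying assignment $s$ of $\unr{\phi}_k$ to guide a descent from the root while maintaining the invariant that $s\models\snf_{J(i)}(\psi)_G^{J(i)}$ for labelled formulas, handling the expansion rules via the structure of $\snf$, bridging \trule{Step} applications with the biconditional conjuncts, and then deriving $s\models\Omega(\branch)$ to rule out the \trule{Contradiction} rule. The only cosmetic difference is that the paper maintains the invariant as a biconditional (the converse direction being needed later, in the proof that a satisfiable $\encod{\phi}_k$ yields an accepted branch), whereas your one-directional version suffices for this lemma.
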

\begin{proof}
  Let $s$ be a $\theory'$-state such that $s\models\unr{\phi}_k$. We use $s$ as a
  guide to navigate the tableau tree to find a suitable branch which is either
  accepted and has $k+1$ poised nodes, or is longer. To do that, we build a
  sequence of branch prefixes $\branch_i=\seq{\node_0,\ldots,\node_i}$ where at
  each step we obtain $\branch_{i+1}$ by choosing $\node_{i+1}$ among the
  children of $\node_i$, until we reach a leaf or $k+1$ poised nodes. During the
  descent, we build a partial function $J:\N\to\N$ that maps positions $j$ in
  $\branch_i$ to indexes $J(j)$ such that, for all $\psi$, it holds that
  $\psi\in\Gamma(\node_j)$ if and only if $s\models\snf_{J(j)}(\psi)_G^{J(j)}$.
  As the base case, we put $\branch_0=\seq{u_0}$ and $J(0)=0$ so that the
  invariant holds since $\Gamma(\node_0)=\set{\phi}$ and
  $s\models\snf_0(\phi)^0_G$ by the definition of $\unr{\phi}^k$. Then,
  depending on the rule that was applied to $\node_i$, we choose $\node_{i+1}$
  among its children as follows.
  \begin{enumerate}
    \item If the \trule{Step} rule has been applied to $\node_i$, then there is
    a unique child that we choose as $\node_{i+1}$, and we define
    $J(i+1)=J(i)+1$. Now, for all $\ltl{X\alpha}\in\tlabel(\node_i)$, we have
    $\alpha\in\tlabel(\node_{i+1})$ by construction of the tableau. Note that
    $\snf_{J(i)}(\ltl{X\alpha})=\ltl{X\alpha}$, hence we know by construction
    that $s\models\ltl{(X\alpha)_G^{J(i)}}$. Then, by definition of
    $\unr{\phi}^k$, we know that $s\models\snf_{J(i)+1}(\alpha)_G^{J(i)+1}$, \ie
    $s\models\snf_{J(i+1)}(\alpha)_G^{J(i+1)}$. On the other direction, if
    $s\models\snf_{J(i+1)}(\alpha)_G^{J(i+1)}$, \ie
    $s\models\snf_{J(i)+1}(\alpha)_G^{J(i)+1}$, then by definition of
    $\unr{\phi}^k$ we have $s\models (\ltl{X\alpha})_G^{J(i)}$, hence
    $s\models\snf_{J(i)}(\ltl{X\alpha})_G^{J(i)}$, and thus
    $\ltl{X\alpha}\in\tlabel(\node_i)$, so by construction of the tableau it
    holds that $\alpha\in\tlabel(\node_{i+1})$. Hence the invariant holds.
    \item If an \emph{expansion rule} has been applied to $\node_i$, then there 
    are one or two children. In both cases, we set $J(i+1)=J(i)$. Then, we 
    proceed as follows.
    \begin{enumerate}
    \item If there is only one child, then it is chosen as $\node_{i+1}$. In
      such a case, the applied rule is necessarily the \trule{Conjunction} one,
      applied to a formula $\psi\equiv\psi_1\land\psi_2$, and thus
      $\psi_1,\psi_2\in\tlabel(\node_{i+1})$. By construction,
      $s\models\snf_{J(i)}(\psi)_G^{J(i)}$, and thus
      $s\models\snf_{J(i+1)}(\psi)_G^{J(i+1)}$. Since
      $\snf_{J(i+1)}(\psi_1\land\psi_2)=\snf_{J(i+1)}(\psi_1)\land\snf_{J(i+1)}(\psi_2)$,
      it holds that $s\models\snf_{J(i+1)}(\psi_1)_G^{J(i+1)}$ and
      $s\models\snf_{J(i+1)}(\psi_2)_G^{J(i+1)}$. For the other direction, if
      $s\models\snf_{J(i+1)}(\psi_1)_G^{J(i+1)}$ and
      $s\models\snf_{J(i+1)}(\psi_2)_G^{J(i+1)}$, then
      $s\models\snf_{J(i+1)}(\psi_1\land\psi_2)_G^{J(i+1)}$, and thus
      $s\models\snf_{J(i)}(\psi_1\land\psi_2)_G^{J(i)}$. Then, by construction,
      it holds that $\psi_1\land\psi_2\in\tlabel(\node_i)$, and thus
      $\psi_1,\psi_2\in\tlabel(\node_{i+1})$. Hence, the invariant holds. %ale
      
    \item If there are two children $\node_i'$ and $\node_i''$, then let us
      suppose the applied rule is the \trule{Disjunction} rule (similar
      arguments hold for the other rules). In this case, the rule has been
      applied to a formula $\psi\equiv\psi_1\lor\psi_2$, and thus
      $\psi_1\in\tlabel(\node_i')$ and $\psi_2\in\tlabel(\node_i'')$. We know
      that $s\models\snf_{J(i)}(\psi)_G^{J(i)}$, and hence
      $s\models\snf_{J(i+1)}(\psi)_G^{J(i+1)}$. Since
      $\snf_{J(i+1)}(\psi_1\lor\psi_2)=\snf_{J(i+1)}(\psi_1)\lor\snf_{J(i+1)}(\psi_2)$,
      it holds that either $s\models\snf_{J(i+1)}(\psi_1)_G^{J(i+1)}$ or
      $s\models\snf_{J(i+1)}(\psi_2)_G^{J(i+1)}$. Now, we choose $\node_{i+1}$
      accordingly, so to respect the invariant. Note that if both nodes are
      eligible, which one is chosen does not matter. The other direction of the
      invariant holds as well, since if either $s\models\snf_{J(i+1)}(\psi_1)_G^{J(i+1)}$
      or $s\models\snf_{J(i+1)}(\psi_2)_G^{J(i+1)}$, then $s\models\snf_{J(i+1)}(\psi_1)_G^{J(i)}$
      or $s\models\snf_{J(i)}(\psi_2)_G^{J(i)}$, and thus
      $s\models\snf_{J(i)}(\psi_1\lor\psi_2)_G^{J(i)}$. Hence,
      $\psi_1\lor\psi_2\in\tlabel(\node_i)$, and then either
      $\psi_1\in\tlabel(\node_{i+1})$ or $\psi_2\in\tlabel(\node_{i+1})$.
    \end{enumerate}
  \end{enumerate}
  Let $\branch=\seq{\node_0,\ldots,\node_i}$ be the branch prefix built as
  explained above, and let $\bar\pi=\seq{\pi_0,\ldots,\pi_n}$ be the sequence of
  its poised nodes. As already pointed out, the descent stops when $\pi_n$ is a
  leaf or when $n=k$. Note that, in any case, $\node_i=\pi_n$. In case we find a
  leaf, it is not possible that it has been crossed by the \trule{Contradiction}
  rule. We show that by showing that $\Omega(\branch)$ is satisfiable and in
  particular, that $s\models\Omega(\branch)$. This can be seen by considering
  that $\Omega(\branch)$ includes the conjunction of $(L_i(\psi))^i$ for each
  first-order formula $\psi\in\tlabel(\pi_i)$. Note as well that
  $\snf_i(\psi)^i_G\equiv (L_i(\psi))^i$ for first-order formulas. By the
  invariant maintained during the descent, $s\models\snf_i(\psi)^i_G$, hence
  $s\models(L(\psi))^i$. Note that since $s\models\unr{\phi}_k$, in particular
  it holds that $s\models\ell^i$ for $0\le i < k$. Altogether, we have that
  $s\models\Omega(\branch)$. Therefore, $\branch$ cannot have been rejected, and
  is thus accepted or the prefix of a branch longer than $k+1$ poised nodes.
\end{proof}

\begin{lemma}
  \label{lemma:empty-encoding-left-to-right}
  Let $\phi$ be an \LTLfMT formula. If the tableau for $\phi$ contains an
  accepted branch of $k+1$ poised nodes, then $\encod{\phi}_k$ is satisfiable.
\end{lemma}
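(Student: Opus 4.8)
The plan is to unpack what an accepted branch of $k+1$ poised nodes gives us and then reuse the machinery already developed for the $k$-unraveling. Since the branch is accepted, it must have been accepted by the \trule{Empty} rule, which tells us two things about its last poised node $\pi_k$: first, $\Gamma(\pi_k)$ contains no \emph{tomorrow} formulas, and second, $\Omega(\branch)\land\neg\ell^k$ is satisfiable modulo $\mathcal T\cup\text{EUF}$. (The branch has $k+1$ poised nodes, so the index of the last node, and of the corresponding literal, is $k$.) I would fix a $\theory'$-state $s$ witnessing $s\models\Omega(\branch)\land\neg\ell^k$ and take this same $s$ as the intended model of $\encod{\phi}_k$.

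First I would recover $s\models\unr{\phi}_k$. Since $s\models\Omega(\branch)$, \cref{lemma:pre-model-existence} yields a prefix pre-model $\bar\Delta=\seq{\Delta_0,\ldots,\Delta_k}$ for $\phi$ compatible with $s$, in the sense that $\snf_i(\psi)_G^i\in\Delta_i$ iff $s\models\snf_i(\psi)_G^i$, and the argument of \cref{lemma:soundness-kunrav-branches-to-models} then shows $s\models\unr{\phi}_k$ (after, as there, fixing $s$ on the otherwise unconstrained fresh predicates so that $s\models(\ltl{X\alpha})_G^i$ iff $(\ltl{X\alpha})_G^i\in\Delta_i$). Recalling that $\encod{\phi}_k\equiv\unr{\phi}_k\land\bigwedge_{\psi\in\XR}\neg\psi_G^k\land\neg\ell^k$, the conjunct $\neg\ell^k$ holds by the choice of $s$, so the only remaining obligation is $s\models\bigwedge_{\psi\in\XR}\neg\psi_G^k$.

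For that last conjunction I would exploit the condition on $\Gamma(\pi_k)$. The correspondence between the pre-model and the branch labels, established inside \cref{lemma:pre-model-existence}, links the fresh tomorrow predicate $(\ltl{X\alpha})_G^k$ at the final step to the presence of $\ltl{X\alpha}$ in $\Gamma(\pi_k)$: intuitively $(\ltl{X\alpha})_G^k\in\Delta_k$ exactly when $\ltl{X\alpha}\in\Gamma(\pi_k)$. Since the \trule{Empty} rule fired, $\Gamma(\pi_k)$ contains no tomorrow formulas, so no $(\ltl{X\alpha})_G^k$ belongs to $\Delta_k$, whence $s\not\models(\ltl{X\alpha})_G^k$ for every $\ltl{X\alpha}\in\XR$. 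Crucially, $\Omega(\branch)$ leaves the last-step tomorrow predicates unconstrained, so this choice is consistent with $s\models\unr{\phi}_k$: the only constraints of $\unr{\phi}_k$ touching them are the level-$k$ equivalences $(\ltl{X\alpha})_G^{k-1}\leftrightarrow\snf_k(\alpha)_G^k$, whose right-hand sides are already pinned down by the genuine first-order atoms recorded in $\Delta_k$ together with the \trule{Step}-induced agreement between $\Gamma(\pi_{k-1})$ and $\Gamma(\pi_k)$. Putting the three conjuncts together gives $s\models\encod{\phi}_k$.

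I expect the main obstacle to be exactly this last step: making rigorous that the absence of tomorrow formulas in $\Gamma(\pi_k)$ forces, consistently with $\unr{\phi}_k$, every final-step predicate $(\ltl{X\alpha})_G^k$ to be false. This requires using the pre-model/label correspondence in the direction from $\Delta_k$ back to $\Gamma(\pi_k)$, and checking that falsifying these predicates does not violate the equivalences relating step $k-1$ to step $k$ — which is where the \trule{Step} rule and the definition of $\snf_k$ do the real work. The other conjuncts ($\unr{\phi}_k$ via the earlier lemmas, and $\neg\ell^k$ by construction) are essentially immediate.
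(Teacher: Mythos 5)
Your overall skeleton matches the paper's: both arguments obtain $s\models\unr{\phi}_k$ via \cref{lemma:soundness-kunrav-branches-to-models} and read $\neg\ell^k$ directly off the witness of the \trule{Empty} rule. The divergence --- and the gap --- is in the remaining conjunct $\bigwedge_{\psi\in\XR}\neg\psi_G^k$. You rely on the claim that $(\ltl{X\alpha})_G^k\in\Delta_k$ \emph{exactly when} $\ltl{X\alpha}\in\Gamma(\pi_k)$, but \cref{lemma:pre-model-existence} establishes only one direction of the label/atom correspondence (if $\psi\in\tlabel^*(\pi_i)$ then $\snf_i(\psi)_G^i\in\Delta_i$); membership of $(\ltl{X\alpha})_G^k$ in $\Delta_k$ is dictated by the valuation $s$ gives to that fresh predicate, which $\Omega(\branch)$ leaves unconstrained, not by the branch label. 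So the absence of tomorrow formulas from $\Gamma(\pi_k)$ does not by itself force $(\ltl{X\alpha})_G^k\notin\Delta_k$. Your fallback --- choose $s$ to falsify all final-step tomorrow predicates and check consistency with the level-$k$ equivalences --- is the right instinct but is asserted rather than proved, and the assertion that the right-hand sides $\snf_k(\alpha)_G^k$ are ``pinned down by the genuine first-order atoms'' is not correct: for $\alpha=\ltl{\beta_1 U \beta_2}$, $\snf_k(\alpha)_G^k$ contains $(\ltl{X(\beta_1 U \beta_2)})_G^k$ as a disjunct, so falsifying that predicate forces $\snf_k(\beta_2)_G^k$ to hold, which needs a separate argument tracing the branch's expansion choices at step $k$ (the \trule{Until} rule must have taken the $\set{\beta_2}$ child, else a tomorrow formula would sit in $\Gamma(\pi_k)$, contradicting the precondition of \trule{Empty}).

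The paper sidesteps all of this by going semantic: it invokes \cref{thm:soundness} to extract from the accepted branch an actual model $\sigma$ of length $k+1$ (with $\mu_i(x)=\mu(x^i)$), proves by induction --- including the strong/weak/plain atom case analysis --- that $\sigma,i\models\psi$ iff $s\models\snf_i(\psi)_G^i$ for every $\psi\in\closure(\phi)$, and then concludes $s\not\models(\ltl{X\alpha})_G^k$ from the purely semantic fact that no tomorrow formula can hold at the last position of a finite trace. To repair your version you must either prove the missing converse of the label/atom correspondence for tomorrow formulas at the last step, or carry out in full the consistency argument you sketch; otherwise the detour through $\sigma$ is what closes the gap.
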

\begin{proof}
  Suppose that the complete tableau for $\phi$ contains an accepted branch of
  $k+1$ poised nodes and no shorter accepted branches, so let
  $\branch=\seq{\node_0,\ldots,\node_n}$ be such a branch, and let
  $\bar\pi=\seq{\pi_0,\ldots,\pi_k}$ be the sequence of its poised nodes. By
  \cref{lemma:soundness-kunrav-branches-to-models} we know that $\unr{\phi}_k$
  is satisfiable, hence we need to show that $s\not\models(\ltl{X\alpha})$ for
  all $\ltl{X\alpha}\in\XR$ and $s\models\neg\ell^k$. Since the branch is
  accepted, we know there is a model $s=(M,\mu)$ such that
  $s\models\Omega(\branch)\land\neg\ell^k$. Thus it follows directly that
  $s\models\neg\ell^k$. Now, let $\bar\Delta=\seq{\Delta_0,\ldots,\Delta_k}$ be
  the pre-model found as in \cref{lemma:pre-model-existence}, which has the
  property that $s\models\snf_i(\psi)^i_G$ iff $\snf_i(\psi)^i_G\in\Delta_i$.
  Note, as noted in the proof of \cref{lemma:pre-model-existence}, that
  \emph{w.l.o.g.} we can assume that $s$ is such that
  $s\models(\ltl{X\alpha})^i_G$ if and only if
  $(\ltl{X\alpha})^i_G\in\Delta_i$, for all $0\le i \le k$. By
  \cref{thm:soundness}, we have a state sequence $\sigma$ such that 
  $\sigma\models\phi$. Recall that this model
  $\sigma=\seq{(M,\mu_0),\ldots,(M,\mu_k)}$ is defined such that
  $\mu_i(x)=\mu(x^i)$. Recall that this implies that
  $\eval{t}^i_{\sigma,\xi}=\eval{t^i}_{s,\xi}$. Thanks to that, by induction we
  can show that for all first-order formulas $\lambda\in\closure(\phi)$, it
  holds that $\sigma,\xi,i\models \lambda$ iff $s,\xi\models L_i(\lambda)^i$.
  For the base case, we distinguish whether $p(t_1,\ldots,t_n)$ is
  \emph{strong}, \emph{weak}, or neither:
  \begin{enumerate}
    \item if the atom is \emph{strong}, we have that $L_i(p(t_1,\ldots,t_n))^i$
      is $\ell^i\land p(t_1^i,\ldots,t_n^i)$. If $\sigma,\xi,i\models
      p(t_1,\ldots,t_n)$, it means
      $p(\eval{t_1}^i_{\sigma,\xi},\ldots,\eval{t_n}^i_{\sigma,\xi})\in p^M$,
      which means $p(\eval{t_1^i}_{s,\xi},\ldots,\eval{t_n^i}_{s,\xi})\in p^M$,
      which means $s,\xi\models p(t_1^i,\ldots,t_n^i)$. Since the atom is
      satisfied and it is strong, we know $i<k$, hence, since $s$ is a model of
      $\Omega(\branch)$, it means $s,\xi\models\ell^i$. Hence $s,\xi\models
      L_i(p(t_1,\ldots,t_n))^i$. Conversely, if $s,\xi\models
      L_i(p(t_1,\ldots,t_n))$, we know $i<k$ because $s,\xi\models\ell^i$. Then,
      the strong atom is well formed, and we know that $s,\xi\models
      p(t_1^i,\ldots,t_n^i)$ means
      $p(\eval{t_1^i}_{s,\xi},\ldots,\eval{t_n^i}_{s,\xi})\in p^M$, which is
      equivalent to saying that
      $p(\eval{t_1}^i_{\sigma,\xi},\ldots,\eval{t_n}^i_{\sigma,\xi})\in p^M$,
      hence $\sigma,\xi,i\models p(t_1,\ldots,t_n)$.
    \item if the atom is \emph{weak}, we have that $L_i(p(t_1,\ldots,t_n))^i$ is
      $\ell^i\implies p(t_1^i,\ldots,t_n^i)$. If $\sigma,\xi,i\models
      p(t_1,\ldots,t_n)$ it means that either $i=k$, in which case $s$ satisfies
      the implication because $s\not\models\ell^i$, or $i<k$ and the atom is
      well formed, hence
      $p(\eval{t_1}^i_{\sigma,\xi},\ldots,\eval{t_n}^i_{\sigma,\xi})\in p^M$,
      which means $p(\eval{t_1^i}_{s,\xi},\ldots,\eval{t_n^i}_{s,\xi})\in p^M$,
      which is $s,\xi\models p(t_1^i,\ldots,t_n^i)$. Conversely, if
      $s,\xi\models L_i(p(t_1,\ldots,t_n))^i$, either $s\not\models\ell^i$,
      which happens only if $i=k$ hence $\sigma,\xi,i\models p(t_1,\ldots,t_n)$
      by definition, or the atom is well formed and we have
      $p(\eval{t_1^i}_{s,\xi},\ldots,\eval{t_n^i}_{s,\xi})\in p^M$ which is
      $p(\eval{t_1}^i_{\sigma,\xi},\ldots,\eval{t_n}^i_{\sigma,\xi})\in p^M$,
      hence $\sigma,\xi,i\models p(t_1,\ldots,t_n)$.
    \item if the atom is neither strong nor weak, the thesis holds trivially.
  \end{enumerate}
  As for the inductive cases, they all follow trivially from the definition of
  $L_i(\lambda)^i$. Now, by induction we can show that for all
  $\psi\in\closure(\phi)$, it holds that $\sigma,i\models\psi$ iff
  $s\models\snf_i(\psi)^i_G$. We already proved above the base case. As for the
  inductive cases, they follow trivially from the definition of
  $\snf_i(\psi)^i_G$. We write the case for tomorrow formulas. If
  $\sigma,i\models\ltl{X\alpha}$, it means $i<k$ and $\sigma,i+1\models\alpha$,
  which by the inductive hypothesis means $s\models\snf_{i+1}(\alpha)^{i+1}_G$.
  By definition of $\bar\Delta$, we have that
  $\snf_{i+1}(\alpha)^{i+1}_G\in\Delta_{i+1}$, which by definition of prefix
  pre-models, means $(\ltl{X\alpha})^i_G\in\Delta_i$. By definition of $s$, we
  now have that $s\models(\ltl{X\alpha})^i_G$, which is
  $s\models\snf_i(\ltl{X\alpha})^i_G$. The converse direction is similar. So we
  proved that for all $\psi\in\closure(\phi)$, it holds that
  $\sigma,i\models\psi$ iff $s\models\snf_i(\psi)^i_G$. But this means that
  $\sigma,i\models\psi$ iff $\snf_i(\psi)^i_G\in\Delta_i$. Hence, at step $k$,
  since $\sigma$ is a model of $\phi$ and we know by definition that
  $\sigma,k\not\models\ltl{X\alpha}$ for any $\ltl{X\alpha}\in\closure(\phi)$,
  it follows that $\ltl{X\alpha}\not\in\Delta_k$, hence
  $s\not\models(\ltl{X\alpha})^i_G$.
\end{proof}

\begin{lemma}
  \label{lemma:empty-encoding-right-to-left}
  Let $\phi$ be an \LTLfMT formula. If $\encod{\phi}_k$ is satisfiable, then the
  tableau for $\phi$ contains an accepted branch.
\end{lemma}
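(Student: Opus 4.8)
The plan is to reuse the branch-building descent from the proof of \cref{lemma:soundness-kunrav-models-to-branches} and then verify that the branch it produces is accepted by the \trule{Empty} rule. Since $\unr{\phi}_k$ is a conjunct of $\encod{\phi}_k$, any model $s$ of $\encod{\phi}_k$ is in particular a model of $\unr{\phi}_k$. First I would run, verbatim, the construction of \cref{lemma:soundness-kunrav-models-to-branches}: using $s$ as a guide, descend the complete tableau for $\phi$, building a branch prefix $\branch$ with sequence of poised nodes $\bar\pi=\seq{\pi_0,\ldots,\pi_k}$, while maintaining the invariant that $\psi\in\Gamma(\node_j)$ iff $s\models\snf_{J(j)}(\psi)^{J(j)}_G$. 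That proof already shows $s\models\Omega(\branch)$, so the \trule{Contradiction} rule never fires along $\branch$. In particular, should the descent hit a leaf before the $(k+1)$-th poised node, that leaf cannot have been rejected, so it was accepted and we are immediately done; the interesting case is when the descent reaches $\pi_k$.

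It then remains to show that the \trule{Empty} rule accepts the branch at $\pi_k$. This rule has two premises: (i) $\Gamma(\pi_k)$ contains no \emph{tomorrow} formulas, and (ii) $\Omega(\branch)\land\neg\ell^k$ is satisfiable. For (ii) I simply combine $s\models\Omega(\branch)$, established by the descent, with $s\models\neg\ell^k$, which holds because $\neg\ell^k$ is a conjunct of $\encod{\phi}_k$; hence $s\models\Omega(\branch)\land\neg\ell^k$ and the formula is satisfiable (which also re-confirms that \trule{Contradiction} does not fire at $\pi_k$). For (i) I invoke the remaining conjuncts $\bigwedge_{\psi\in\XR}\neg\psi^k_G$ of $\encod{\phi}_k$: for every $\ltl{X\alpha}\in\XR$ we have $s\models\neg(\ltl{X\alpha})^k_G$. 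Since $\snf_k(\ltl{X\alpha})=\ltl{X\alpha}$ and therefore $\snf_k(\ltl{X\alpha})^k_G=(\ltl{X\alpha})^k_G$, the invariant at $\pi_k$ gives $\ltl{X\alpha}\in\Gamma(\pi_k)$ iff $s\models(\ltl{X\alpha})^k_G$; because $\Gamma(\pi_k)\subseteq\closure(\phi)$, every tomorrow formula that could occur in $\Gamma(\pi_k)$ lies in $\XR$, so none of them belongs to $\Gamma(\pi_k)$. With both premises met, the \trule{Empty} rule fires, and $\branch$ is an accepted branch of the tableau for $\phi$.

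The argument is essentially mechanical once \cref{lemma:soundness-kunrav-models-to-branches} is available, so I expect no deep obstacle; the only point requiring care is the bookkeeping that matches the three conjuncts of $\encod{\phi}_k$ to the exact wording of the \trule{Empty} rule. Concretely, the identity $\snf_k(\ltl{X\alpha})^k_G=(\ltl{X\alpha})^k_G$ is what lets the $\neg\psi^k_G$ conjuncts translate into the absence of tomorrow formulas in the last poised node, and the satisfiability of $\Omega(\branch)\land\neg\ell^k$ must be seen to serve simultaneously as the witness for premise (ii) and as the reason \trule{Contradiction} cannot have preempted the branch.
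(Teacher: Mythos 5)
Your proposal is correct and follows essentially the same route as the paper's own proof: reuse the guided descent and its invariant from \cref{lemma:soundness-kunrav-models-to-branches}, read off the absence of tomorrow formulas in the last poised node from the conjuncts $\neg\psi^k_G$ via $\snf_k(\ltl{X\alpha})^k_G=(\ltl{X\alpha})^k_G$, and combine $s\models\Omega(\branch)$ with $s\models\neg\ell^k$ to trigger the \trule{Empty} rule. Your explicit handling of the case where the descent terminates at a leaf before $k+1$ poised nodes is a minor (and welcome) extra care the paper leaves implicit.
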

\begin{proof}
  %$(\leftarrow)$ 
  Suppose that $\encod{\phi}_k$ is satisfiable. Hence, there exists a model $s$
  such that $s\models\encod{\phi}_k$. Then, $\unr{\phi}_k$ is satisfiable, and
  we know from \cref{lemma:soundness-kunrav-models-to-branches} that the
  complete tableau for $\phi$ has either an accepted branch, or a branch longer
  than $k+1$ poised nodes. Let $\branch=\seq{\node_0,\ldots,\node_n}$ be the
  branch prefix found as shown in the proof of
  \cref{lemma:soundness-kunrav-models-to-branches}, and let
  $\bar\pi=\seq{\pi_0,\ldots,\pi_k}$ be the sequence of its poised nodes. By
  construction, there exists a function $J:\N\to\N$ fulfilling the invariant:
  $\psi\in\tlabel(u_i)$ if and only if $s\models\snf_{J(i)}(\psi)_G^{J(i)}$. We
  now show that indeed $\branch$ is accepted. Since $\encod{\phi}_k$ is
  satisfiable, then $s\models\neg\psi_G^k$ for each $\psi\in\XR$. Since $\psi$
  is an $\ltl{X}$-request, $\snf_k(\psi)\equiv\psi$, and thus
  $s\not\models\snf_k(\psi)_G^k$. Here, $k=J(j)$, for some $j$, and, from the
  invariant, it follows that $\psi\not\in\tlabel(\node_j)$. Hence, $\node_j$
  does not contain any $\ltl{X}$-request. Furthermore, note that
  $s\models\Omega(\branch)$ because the branch is not rejected by the
  \trule{Contradiction} rule, and $s\models\neg\ell^k$ as stated by
  $\encod{\phi}_k$. Hence, the \trule{Empty} rule is triggered, accepting the
  branch. 
\end{proof}

Now we can prove the following.
\encodingthm*
\begin{proof}
  For the left-to-right direction, suppose the procedure returns \SAT. Then it
  means $\encod{\phi}_k$ is satisfiable for some $k\ge0$, which by
  \cref{lemma:empty-encoding-right-to-left} implies that the complete tableau
  for $\phi$ contains an accepted branch.

  For the right-to-left direction, suppose that the tableau for $\phi$ contains
  an accepted branch $\branch$ of $k+1$ poised nodes for some $k\ge0$. Thus, for
  any $j<k$, there is a branch longer than $j+1$ poised nodes, hence by
  \cref{lemma:soundness-kunrav-branches-to-models} $\unr{\phi}_j$ is
  satisfiable, and Line~5 does not return \UNSAT at the $j$-th iteration of the
  algorithm. Now, at the $k$-th iteration, by
  \cref{lemma:soundness-kunrav-branches-to-models} we know Line~5 again does not
  return \UNSAT. Moreover, by \cref{lemma:empty-encoding-left-to-right}, the
  existence of $\branch$ implies that $\encod{\phi}_k$ is satisfiable, hence
  Line 8 returns \SAT.
\end{proof}

\fi

\end{document}